\newcommand{\Rset}{\mathbb{R}}
\newcommand{\Xset}{\mathbb{X}}
\newcommand{\Yset}{\mathbb{Y}}
\newcommand{\Zset}{\mathbb{Z}}
\newcommand{\Qset}{\mathbb{Q}}
\newcommand{\Pset}{\mathbb{P}}
\newtheorem{thm}{Theorem}
\newtheorem{lem}{Lemma}
\newtheorem{cor}{Corollary}
\newtheorem{obs}{Observation}
\begin{document}

\title{Computational Complexity of the Recoverable Robust Shortest Path Problem with Discrete Recourse}

\author[1]{Marcel Jackiewicz}
\author[1]{Adam Kasperski}
\author[1]{Pawe{\l} Zieli\'nski\footnote{Corresponding author}}

\affil[1]{
Wroc{\l}aw  University of Science and Technology, Wroc{\l}aw, Poland\\
            \texttt{\{marcel.jackiewicz,adam.kasperski,pawel.zielinski\}@pwr.edu.pl}}

\date{}

\maketitle

 \begin{abstract}
 In this paper, the recoverable robust shortest path problem is investigated. Discrete budgeted interval uncertainty representation is used to model uncertain second-stage arc costs. 
 The known complexity results for this problem are strengthened. Namely,
 it is shown that the recoverable robust shortest path problem is
  $\Sigma_3^p$-hard for the arc exclusion and arc symmetric difference neighborhoods.  Furthermore, it is also proven that the inner adversarial problem for these neighborhoods is $\Pi_2^p$-hard.
  \end{abstract}
  
\noindent \textbf{Keywords}: robust optimization, interval data, recovery, shortest path, computational complexity

\section{Preliminaries}

Let $G=(V,A)$ be a directed graph with nonnegative arc costs $C_e\geq 0$, $e\in A$. Two nodes $s, t\in V$ are distinguished as source and destination nodes, respectively. We will denote by~$\Phi$ the set of all simple $s$-$t$ paths in $G$. In the shortest path problem, we seek a simple $s$-$t$ path in $G$, minimizing the total cost.  This classic network optimization problem can be solved in polynomial time using one of several known algorithms~(see, e.g.,~\cite{AMO93}). In this paper, we 
wish to investigate
 a recoverable version of the problem, in which an $s$-$t$ path $X\in \Phi$, called a \emph{first-stage path}, can be modified to some extent in the second stage by applying a limited \emph{recovery action}.  This action consists in choosing a 
 \emph{second-stage path}~$Y$ in some prescribed neighborhood $\Phi(X,k)$ of $X$, where $k\geq 0$ is a given 
 \emph{recovery parameter}. We will consider the following three types of neighborhoods, 
known from the literature (see, e.g.,~\cite{SAO09,NO13}),
called \emph{arc inclusion}, \emph{arc exclusion}, and \emph{arc symmetric difference} neighborhoods, respectively: 
\begin{align}
\Phi^{\mathrm{incl}}(X,k)&=\{Y\in\Phi \,:\,  |Y\setminus X|\leq k\}, \label{incl}\\
\Phi^{\mathrm{excl}}(X,k)&=\{Y\in\Phi \,:\,  |X\setminus Y|\leq k\}, \label{excl}\\
\Phi^{\mathrm{sym}}(X,k)&=\{Y\in\Phi \,:\,  | (Y\setminus X) \cup (X\setminus Y) |\leq k\} \label{sym}.
\end{align}  
The cost of the first-stage path $X$ is known and equals $\sum_{e\in X} C_e$. The second-stage arc costs are uncertain while the path $X$ is being chosen. Let $\hat{c}_e$ be a \emph{nominal second-stage cost} of the arc $e\in A$ and $\Delta_e$ be a 
\emph{maximum deviation} of the second-stage cost of $e$ from $\hat{c}_e$. Therefore, $c_e\in [\hat{c}_e,\hat{c}_e+\Delta_e]$ for each $e\in A$. 
In the following, we will assume that each realization of the second-stage arc costs $(c_e)_{e\in A}$, called \emph{scenario}, belongs to the following \emph{uncertainty set}~\cite{BS04}:
\begin{align}
\mathcal{U}(\Gamma^d)&=\{\pmb{c}=(c_e)_{e\in A} \,:\, c_e\in [\hat{c}_e, \hat{c}_e+\Delta_e],
 |\{e\in A\,:\, c_e>\hat{c}_e\}|\leq \Gamma^d \}.
\label{intsetgd}
\end{align}
The set $\mathcal{U}(\Gamma^d)$ is called a \emph{discrete budgeted uncertainty set} and the parameter $\Gamma^d\in \{0,\dots,|A|\}$ is called a \emph{discrete budget}. It is the maximum number of the second-stage arc costs that can differ from their nominal values. 
In this paper, we study the following \emph{recoverable robust  shortest path problem} with discrete recourse~\cite{B12, NO13}:
 \begin{equation}
\textsc{Rec Rob SP} : \; \min_{X \in \Phi} \left ( \sum_{e\in X} C_e  + 
\max_{\pmb{c}\in \mathcal{U}(\Gamma^d)}\min_{Y\in \Phi(X,k)} \sum_{e\in Y} c_e \right).
\label{rrsp}
\end{equation}
Given a first-stage path $X\in \Phi$, we will consider the following \emph{adversarial problem}:
\begin{equation}
\textsc{Adv SP}: \; \max_{\pmb{c}\in \mathcal{U}(\Gamma^d)}\min_{Y\in \Phi(X,k)} \sum_{e\in Y} c_e.
\label{adv}
\end{equation}
Given additionally a second-stage cost scenario $(c_e)_{e\in A}$ we will also investigate the following \emph{incremental problem}:
\begin{equation}
\textsc{Inc SP}: \; \min_{Y\in \Phi(X,k)} \sum_{e\in Y} c_e.
\label{inc}
\end{equation}

The concept of recoverable robustness was first introduced in~\cite{LLMS09}. The recoverable robust version of the shortest path problem (\textsc{Rob Rec SP}) was first investigated in~\cite{B12}, where it was shown to be strongly NP-hard for the arc inclusion neighborhood.  It was shown in~\cite{SAO09} that \textsc{Inc SP} for the arc exclusion neighborhood is strongly NP-hard, and the same hardness result was established in~\cite{NO13} for the arc symmetric difference neighborhood. Interestingly, the problem \textsc{Inc SP} for the arc inclusion neighborhood can be solved in polynomial time~\cite{SAO09}.  The problem \textsc{Adv SP}  is strongly NP-hard for all 
the neighborhoods above, which results from a reduction from the \emph{most vital arcs problem}~\cite{BKS95} (see also~\cite{NO13}). 
Since \textsc{Adv SP} is a special case of \textsc{Rec Rob SP} (we set $C_e=0$ for $e\in X$ and $C_e=M$ for some large number $M$, if $e\notin X$), the same hardness result holds for \textsc{Rec Rob SP}. The \textsc{Rec Rob SP} problem is  known to be weakly NP-hard for the class of arc-series parallel digraphs~\cite{GLW22}. This result follows immediately from the fact that the recoverable robust representatives selection problem, discussed in~\cite{GLW22}, is equivalent to \textsc{Rec Rob SP} for a very restrictive arc series-parallel digraph.
 Finally, it has been recently shown in~\cite{KZ24d} that \textsc{Rec Rob SP} is strongly NP-hard and not approximable unless P=NP for acyclic layered digraphs. These negative results motivated us to place the problem in a higher complexity class.

The recoverable robust shortest path problem was also considered for the \emph{continuous budgeted uncertainty set}~\cite{NO13}, that is, the set resulting from replacing the condition 
$ |\{e\in A\,:\, c_e>\hat{c}_e\}|\leq \Gamma^d$ in~(\ref{intsetgd}) with its continuous version
 $\sum_{e\in A} (c_e-\hat{c}_e)\leq \Gamma^c$ (the parameter $\Gamma^c\in \Rset_{+}$ is called a \emph{continuous  budget}).
  In this case, the \textsc{Rec Rob SP} problem remains strongly NP-hard for acyclic digraphs~\cite{KZ24d}, but admits a compact (of polynomial size) mixed-integer programming
  formulation (MIP for short) for all three considered neighborhoods~\cite{JKZ25}. 
 Compact MIP formulations for \textsc{Rec Rob SP} also exist under~$\mathcal{U}(\Gamma^d)$ when $\Gamma^d=|A|$
 (see~\cite{JKZ25}). In this case, the set $\mathcal{U}(\Gamma^d)$ 
 reduces to the standard interval uncertainty representation $\mathcal{U}=\times_{e\in A} [\hat{c}_e, \hat{c}_e+\Delta_e]$. The problem then remains strongly NP-hard for general digraphs~\cite{B12, NO13} but can be solved in polynomial time when the input graph is acyclic~\cite{JKZ25}.

\paragraph{Our results} In this paper, we strengthen the negative complexity results for \textsc{Rec Rob SP} with discrete recourse, i.e. with the uncertainty set of the form~(\ref{intsetgd}). We show that the problem is  $\Sigma_3^p$-hard for the arc exclusion and the arc symmetric difference neighborhoods.  We also prove that the adversarial problem for these neighborhoods is $\Pi_2^p$-hard. Furthermore, for the arc exclusion neighborhood, the hardness results remain valid even if $k=2$ in case of adversarial and $k=3$ in case of recoverable robust problems.
 This demonstrates the great difficulty in solving the recoverable robust shortest path problem and the inner adversarial problem with a discrete recourse. In particular, they exclude compact (of polynomial size)
  MIP formulations for them (see, e.g.,~\cite{WO21}).
 Table~\ref{tabres} summarizes the results obtained in this paper. 
  \begin{table}[h]
 \begin{small}
 \caption{Summary of the results for  \textsc{Adv SP} and  \textsc{Rec Rob SP}
 under the discrete  budgeted uncertainty  $\mathcal{U}(\Gamma^d)$
 for various neighborhoods~$\Phi(X,k)$.}
 \label{tabres}
 \begin{center}

  \begin{tabular}{l|l|l|l}
                      \hline
                    &\multicolumn{3}{c}{Neighborhoods $\Phi(X,k)$}\\
                     \cline{2-4}
Problem& $\Phi^{\mathrm{incl}}(X,k)$ & $\Phi^{\mathrm{excl}}(X,k)$ & $\Phi^{\mathrm{sym}}(X,k)$\\ \hline \hline
    \textsc{Adv SP}      & strongly NP-hard~\cite{BKS95}& $\Pi_2^p$-hard, $\Pi_2^p$-hard &  $\Pi_2^p$-hard, $\Pi_2^p$-hard\\
         & &  to approximate,&  to approximate \\ 
         & &  even for $k=2$  &(Corollary~\ref{cdadsym})\\
         &  &   (Corollary~\ref{cdadexcl})                                        &\\
      \hline
 \textsc{Rec Rob SP}   &   strongly NP-hard,&  $\Sigma_3^p$-hard, $\Sigma^p_3$-hard & $\Sigma_3^p$-hard,  $\Sigma^p_3$-hard\\
                                     &   not approximable,  even for &   to approximate,  & to approximate\\
                                      &  $k=1$ and $\Gamma^d=1$~\cite{NO13} &  even for $k=3$ &(Corollary~\ref{cdrrsym})\\
                                      &                                      & (Corollary~\ref{cdrrexcl})&\\                                      
  \hline
 \end{tabular}
 \end{center}
\end{small}
 \end{table}

\section{Hardness of the adversarial problem}

To show the hardness results for \textsc{Adv SP} we will use the following problem:

\begin{description}
\item{$\forall (\Gamma) \exists$\textsc{CNF-SAT}}
\item{Input:} Two disjoint sets $\Yset$ and $\Zset$ of Boolean variables that take the values 0 and 1,
a Boolean formula $\digamma(\Yset,\Zset)$ in conjunctive normal form, an integer~$\Gamma$.
\item{Question:} Is it true that for all subsets $\Yset'\subseteq \Yset$ such that $|\Yset'|\leq \Gamma$,
there exists  0-1~assignment~$(\pmb{y},\pmb{z})$
of variables in~$\Yset$ and $\Zset$ in which  all variables in $\Yset'$ are set to~0
($\pmb{y}|_{\Yset'}=\pmb{0}$, where $\pmb{y}|_{\Yset'}$ stands for the restriction of the 
assignment $\pmb{y}$ to the set~$\Yset'$)
 and
$\digamma(\pmb{y},\pmb{z})$ is satisfied?
\end{description}

\begin{thm}
\label{thm01}
$\forall (\Gamma) \exists$\textsc{CNF-SAT} is $\Pi^p_2$-complete, even for $|\Yset|=|\Zset|$.
\end{thm}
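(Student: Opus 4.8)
The plan is to prove membership in $\Pi_2^p$ and $\Pi_2^p$-hardness separately. Membership is immediate from the quantifier structure: a candidate subset $\Yset'$ is described by $|\Yset|$ bits and a candidate assignment $(\pmb{y},\pmb{z})$ by polynomially many bits, while the predicate ``$|\Yset'|\le\Gamma \Rightarrow \big(\pmb{y}|_{\Yset'}=\pmb{0}\wedge \digamma(\pmb{y},\pmb{z})\big)$'' is checkable in polynomial time; folding the cardinality bound into the matrix as a premise exhibits the problem in the canonical $\forall\exists$ form, hence in $\Pi_2^p$.

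For hardness I would reduce from the classical $\Pi_2^p$-complete problem $\forall\exists$\textsc{CNF-SAT}: given $\forall\pmb{u}\,\exists\pmb{v}\,\phi(\pmb{u},\pmb{v})$ with $\phi$ in CNF and $\pmb{u}=(u_1,\dots,u_m)$, decide whether for every $\pmb{u}$ there is $\pmb{v}$ satisfying $\phi$. The central difficulty is that the universal player in the target problem is much weaker than a full ``$\forall\pmb{u}$'': it may only force at most $\Gamma$ chosen $\Yset$-variables to $0$, never to $1$, and it may also waste budget. To simulate a free universal bit with this one-sided, budgeted power I would introduce for each $u_i$ a pair of $\Yset$-variables $a_i,b_i$ (forcing $a_i=0$ encodes $u_i=0$, forcing $b_i=0$ encodes $u_i=1$), set $\Gamma=m$, and use all $\Yset$-variables only positively. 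The formula $\digamma$ consists of (i) each clause of $\phi$ with $u_i$ replaced by $a_i$, $\neg u_i$ by $b_i$, the $\pmb{v}$-literals kept, and a fresh existential literal $w$ appended; and (ii) an escape gadget $(\neg w\vee h_1\vee\cdots\vee h_m)$ together with $(\neg h_i\vee a_i)$ and $(\neg h_i\vee b_i)$ for each $i$, where $w,h_1,\dots,h_m$ are new variables placed in $\Zset$.

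The correctness argument turns on how the adversary distributes $\Yset'$ among the pairs. If some pair $i$ is left untouched ($a_i,b_i\notin\Yset'$), the existential player sets $a_i=b_i=h_i=w=1$, satisfying every type-(i) clause through $w$ and every gadget clause, so the inner $\exists$ succeeds. If some pair has both its variables in $\Yset'$, then because $|\Yset'|\le m$ and there are $m$ pairs another pair must be left untouched, and the same escape wins for the existential player. Consequently the inner $\exists$ can fail only when $\Yset'$ contains exactly one variable of every pair, which is precisely a genuine assignment $\pmb{u}$. For such a choice no pair is fully alive, so all $h_i$ and then $w$ are forced to $0$, the surviving token of each pair is set to $1$ (tokens occur only positively), and the type-(i) clauses collapse to $\phi(\pmb{u},\cdot)$. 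Hence the inner $\exists$ succeeds iff $\exists\pmb{v}\,\phi(\pmb{u},\pmb{v})$, and the constructed instance is a YES-instance iff the original $\forall\exists$\textsc{CNF-SAT} instance is.

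The step I expect to be the main obstacle is exactly this case analysis: showing that \emph{no} deviation from ``exactly one forced variable per pair'' can ever help the universal player. The escape gadget is what makes this work --- under-forcing a pair hands the existential player an unconditional win, and the tight budget $\Gamma=m$ guarantees that over-forcing one pair necessarily under-forces another --- and verifying that the gadget cannot be abused by the existential player (it can never fabricate a fully-alive pair, since forced variables are pinned to $0$, and the surviving token is only ever worth setting to $1$) is the delicate point. Finally, to obtain the statement for $|\Yset|=|\Zset|$ I would pad the smaller side with fresh variables not occurring in $\digamma$: surplus $\Yset$-variables are never worth forcing (doing so wastes budget and triggers the escape), while surplus $\Zset$-variables are inert, so the balanced instance is equivalent and the reduction is complete.
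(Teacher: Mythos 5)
Your proposal is correct and follows essentially the same route as the paper's proof: the pair of positively-occurring $\Yset$-variables per universal variable, the appended escape literal ($w$ versus the paper's $s$), the gadget clauses $(\neg h_i\vee a_i)$, $(\neg h_i\vee b_i)$ and $(\neg w\vee h_1\vee\cdots\vee h_m)$ matching the paper's clauses \eqref{2lc} and \eqref{1lc}, the budget $\Gamma$ equal to the number of universal variables, the same two-case analysis of how the adversary spends the budget, and the same padding trick for $|\Yset|=|\Zset|$. No gaps.
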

The proof of Theorem~\ref{thm01} is shown in Appendix~\ref{dod}. It is adapted from~\cite[Theorem~1]{GLW24} to our case. We will also use the following problem:
\begin{description}
\item{\textsc{2-Vertex-Disjoint Paths}}
\item{Input:} A digraph $G=(V,A)$ and two terminal pairs $(s_i,t_i)$, where
$s_i,t_i\in V$, $i\in\{1,2\}$ are pairwise distinct vertices.
\item{Question:} Are there vertex-disjoint paths $\pi_1$ and $\pi_2$  in~$G$
such that $\pi_i$ is a simple $s_i$-$t_i$ path, $i\in \{1,2\}$?
\end{description}
The \textsc{2-Vertex-Disjoint Paths} problem is strongly NP-complete in general digraphs~\cite{FHW80}.

\begin{obs}
Let $\widehat{\mathcal{U}}(\Gamma^d)=\{\pmb{c}=(c_e)_{e\in A} \,:\, c_e\in \{\hat{c}_e,\overline{c}_e\},
\overline{c}_e=\hat{c}_e+\delta_e\Delta_e,
\delta_e\in\{0,1\},
 e\in A\,:\, \sum_{e\in A} \delta_e\leq \Gamma^d \}\subseteq \mathcal{U}(\Gamma^d)$. The following equation holds:
 \[
  \max_{\pmb{c}\in \mathcal{U}(\Gamma^d)}\min_{Y\in \Phi(X,k)} \sum_{e\in Y} c_e=
   \max_{\pmb{c}\in \widehat{\mathcal{U}}(\Gamma^d)}\min_{Y\in \Phi(X,k)} \sum_{e\in Y} c_e.
 \]
 \label{obext}
\end{obs}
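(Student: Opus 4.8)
The plan is to establish the two inequalities separately and note that only one of them carries any content. Since $\widehat{\mathcal{U}}(\Gamma^d)\subseteq\mathcal{U}(\Gamma^d)$ is already observed in the statement, the inequality
\[
\max_{\pmb{c}\in\widehat{\mathcal{U}}(\Gamma^d)}\min_{Y\in\Phi(X,k)}\sum_{e\in Y}c_e \;\le\; \max_{\pmb{c}\in\mathcal{U}(\Gamma^d)}\min_{Y\in\Phi(X,k)}\sum_{e\in Y}c_e
\]
is immediate, because maximizing over a larger feasible set can only increase the value. The whole point of the observation is therefore the reverse inequality, which I would prove by showing that every scenario in $\mathcal{U}(\Gamma^d)$ can be replaced by an extreme scenario in $\widehat{\mathcal{U}}(\Gamma^d)$ without decreasing the inner optimal value.

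The key structural fact is the monotonicity of the inner minimization in the cost vector. Writing $f(\pmb{c})=\min_{Y\in\Phi(X,k)}\sum_{e\in Y}c_e$, I would first observe that the feasible set $\Phi(X,k)$ does not depend on $\pmb{c}$; hence if $\pmb{c}'\ge\pmb{c}$ holds componentwise, then $\sum_{e\in Y}c_e'\ge\sum_{e\in Y}c_e$ for every fixed $Y\in\Phi(X,k)$, and taking the minimum over $Y$ on both sides yields $f(\pmb{c}')\ge f(\pmb{c})$. Thus $f$ is nondecreasing in each coordinate $c_e$.

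Next I would take an arbitrary $\pmb{c}\in\mathcal{U}(\Gamma^d)$ and let $S=\{e\in A:c_e>\hat{c}_e\}$, so that $|S|\le\Gamma^d$ by the budget constraint. Define $\pmb{c}'$ by $c_e'=\hat{c}_e+\Delta_e$ for $e\in S$ and $c_e'=\hat{c}_e$ for $e\notin S$; equivalently, set $\delta_e=1$ for $e\in S$ and $\delta_e=0$ otherwise. Then $\sum_{e\in A}\delta_e=|S|\le\Gamma^d$, so $\pmb{c}'\in\widehat{\mathcal{U}}(\Gamma^d)$, and by construction $c_e'\ge c_e$ for every $e$ (for $e\in S$ we have $c_e\le\hat{c}_e+\Delta_e=c_e'$, while for $e\notin S$ we have $c_e=\hat{c}_e=c_e'$). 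Monotonicity then gives $f(\pmb{c}')\ge f(\pmb{c})$. Applying this to an optimal scenario $\pmb{c}^{\ast}$ for the left-hand side and using $\pmb{c}'\in\widehat{\mathcal{U}}(\Gamma^d)$ yields the reverse inequality, and the two inequalities combine to the claimed equality.

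I expect no genuine obstacle here; the argument is short, and the only thing one really has to get right is the direction of monotonicity together with the observation that rounding each active deviation up to its maximum keeps the scenario inside $\widehat{\mathcal{U}}(\Gamma^d)$. This works precisely because the discrete budget counts only \emph{how many} arcs deviate, not by how much, so replacing partial deviations by full deviations on the same support $S$ consumes no additional budget. The value of the observation is then purely instrumental: it lets all subsequent hardness reductions work with the finite, combinatorial family of extreme scenarios $\widehat{\mathcal{U}}(\Gamma^d)$ in place of the continuous set $\mathcal{U}(\Gamma^d)$.
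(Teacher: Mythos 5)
Your proof is correct. The paper states Observation~\ref{obext} without any proof, and your argument---the trivial inequality from $\widehat{\mathcal{U}}(\Gamma^d)\subseteq\mathcal{U}(\Gamma^d)$, plus monotonicity of the inner minimum combined with rounding every active deviation up to its upper bound on the same support (which costs no extra budget since $\Gamma^d$ only counts deviating arcs)---is exactly the standard justification the authors implicitly rely on.
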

In what follows, we can replace $\mathcal{U}(\Gamma^d)$ with $\widehat{\mathcal{U}}(\Gamma^d)$ in the definition of \textsc{Adv SP} and \textsc{Rec Rob SP}. From now on, we use the uncertainty set~$\widehat{\mathcal{U}}(\Gamma^d)$.
 Consider the following decision version of the adversarial problem:
\begin{description}
\item{\textsc{Decision-Adv SP}$(\Gamma^d)$}
\item{Input:}  A digraph~$G=(V,A)$,
a path $X\in \Phi$, second stage arc costs under the discrete budgeted
uncertainty~$\widehat{\mathcal{U}}(\Gamma^d)$, i.e. 
nominal costs~$\hat{\pmb{c}}=(\hat{c}_e)_{e\in A}\geq \pmb{0}$,  cost upper bounds~$\overline{\pmb{c}}=(\overline{c}_e)_{e\in A}\geq \hat{\pmb{c}}$, an integer budget~$\Gamma^d$,
a recovery parameter~$k$, a bound~$\gamma\geq 0$.
\item{Question:} Does for every $\pmb{c}\in \widehat{\mathcal{U}}(\Gamma^d)$ a path
$Y\in \Phi(X,k)$ exist such that $\sum_{e\in Y} c_e\leq \gamma$?
\end{description}

\begin{lem}
\textsc{Decision-Adv SP}$(\Gamma^d)$ 
with 
$ \Phi(X,k)\in \{\Phi^{\mathrm{excl}}(X,k),
\Phi^{\mathrm{sym}}(X,k)\}$
 is in the class~$\Pi^p_2$.
\label{lpi2p}
\end{lem} 
\begin{proof}
Indeed, for these two neighborhoods, \textsc{Decision-Adv SP}$(\Gamma^d)$ can be expressed as
 $(\forall \,\pmb{c}\in \widehat{\mathcal{U}}(\Gamma^d))(\exists\, Y\in \Phi(X,k))
(\sum_{e\in Y} c_e\leq \gamma)$.
\end{proof}


\begin{thm}
\textsc{Decision-Adv SP}$(\Gamma^d)$ 
 with the neighborhood $\Phi(X,k):=\Phi^{\mathrm{excl}}(X,k)$
 is $\Pi^p_2$-complete even if $k=2$.
 \label{dadexcl}
\end{thm}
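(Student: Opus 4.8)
The membership in $\Pi^p_2$ is already established in Lemma~\ref{lpi2p}, so it remains to prove $\Pi^p_2$-hardness. The plan is to reduce from $\forall(\Gamma)\exists$\textsc{CNF-SAT}, which is $\Pi^p_2$-complete by Theorem~\ref{thm01}. The two quantifier blocks match the shape of \textsc{Decision-Adv SP}$(\Gamma^d)$ exactly: the universal choice of a set $\Yset'\subseteq\Yset$ with $|\Yset'|\le\Gamma$ will be encoded by the adversary's choice of which $\le\Gamma^d$ arcs to inflate (taking $\Gamma^d=\Gamma$), and the existential choice of a satisfying assignment $(\pmb{y},\pmb{z})$ will be encoded by the recovery path $Y\in\Phi^{\mathrm{excl}}(X,k)$.

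First I would design a backbone path $X$ that the recovery path is essentially forced to follow. Since the exclusion neighborhood with $k=2$ lets $Y$ drop at most two arcs of $X$ (while adding arbitrarily many new arcs), every feasible $Y$ retains all but two arcs of $X$; dropping two non-adjacent arcs $e_p,e_q$ splits $X$ into three segments $S_1,S_2,S_3$, and reassembling them into one simple $s$-$t$ path forces two detours $v_p\leadsto v_{p+1}$ and $v_q\leadsto v_{q+1}$ that are vertex-disjoint (and disjoint from the retained segments). This is the key leverage point: I would route both detours through a single shared gadget, so that a feasible $Y$ exists precisely when two vertex-disjoint paths can be routed through that gadget, i.e.\ exactly a \textsc{2-Vertex-Disjoint Paths} instance. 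Note that $k=2$ is the smallest value for which this bites: for $k=1$ only one detour is needed and the incremental problem is a plain shortest-path computation, whereas two detours inject the NP-hardness of 2-VDP into the existential layer.

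The gadget itself I would build from the standard \textsc{2-Vertex-Disjoint Paths} encoding of the formula $\digamma(\Yset,\Zset)$, with variable sub-gadgets offering a ``true'' and a ``false'' route for each $y_i$ and $z_j$, and clause sub-gadgets that the two disjoint detours can traverse only when the chosen literals satisfy every clause. The terminal pairs of this 2-VDP instance are the endpoints $v_p,v_{p+1},v_q,v_{q+1}$ exposed by the two dropped arcs, and I would make the detour the only routing of cost below $\gamma$, so that the two arcs actually dropped are forced to be the intended ones. The adversary's inflatable arcs are placed on the ``true'' route of each $y_i$: inflating such an arc raises the cost of any routing that sets $y_i=1$ above $\gamma$, which is equivalent to forcing $y_i=0$, i.e.\ to placing $y_i$ in $\Yset'$; variables outside $\Yset'$ and all $z_j$ remain free choices for the recovery path. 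Calibrating the deviations $\Delta_e$ so that a single inflated arc already exceeds $\gamma$, and setting $\gamma$ to the common cost of the satisfying routings, yields the equivalence: for every scenario (choice of $\le\Gamma$ forced variables) there is a recovery path of cost $\le\gamma$ if and only if $\digamma$ is satisfiable under every such forcing --- i.e.\ the $\forall(\Gamma)\exists$\textsc{CNF-SAT} instance is a yes-instance.

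The main obstacle I anticipate is making the gadget perform three tasks at once without interference: (i) forcing the recovery path to drop exactly the two intended arcs, so that the 2-VDP terminals are the correct ones and $Y$ cannot ``cheat'' with some other pair of drops (or with fewer than two drops, which would leave $Y=X$ of prohibitively high backbone cost); (ii) ensuring that the vertex-disjoint detour routings correspond bijectively to satisfying literal choices; and (iii) arranging the inflation to penalize exactly the $y_i=1$ routings, uniformly over all adversary choices of at most $\Gamma$ variables. Point (i) is the delicate one: one must block every alternative use of the two allowed drops (e.g.\ dropping two adjacent arcs to skip a single vertex, or dropping arcs far from the gadget) by padding the backbone with long ``protection'' sub-paths of retained arcs, so that any such deviation either violates the $k=2$ budget or overshoots $\gamma$. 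Verifying both directions then reduces to checking that these protections hold and that the cost accounting separates satisfying from non-satisfying routings by strictly more than the largest $\Delta_e$.
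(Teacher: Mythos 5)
Your proposal is correct and follows essentially the same route as the paper: a reduction from $\forall(\Gamma)\exists$\textsc{CNF-SAT} in which the formula is encoded as a \textsc{2-Vertex-Disjoint Paths} gadget, the first-stage path is a short backbone whose two droppable arcs expose the two terminal pairs, and the adversary's budget is spent on arcs lying on the ``$y_i=1$'' routes. The obstacle you flag as delicate --- forcing $Y$ to drop exactly the two intended arcs --- is resolved in the paper simply by setting $\gamma=0$ and giving the two shortcut arcs $(s_1,t_1)$ and $(s_2,t_2)$ cost $1$ under every scenario, so no padding or protection subpaths are needed.
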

\begin{proof}
Lemma~\ref{lpi2p} shows that our problem lies in $\Pi^p_2$.
Now, we need to prove that it is $\Pi^p_2$-hard.
We will show a reduction  from $\forall (\Gamma) \exists$\textsc{CNF-SAT}
to \textsc{Decision-Adv SP}$(\Gamma^d)$ 
 with $\Phi^{\mathrm{excl}}(X,k)$.
 
We are given an instance $(\Yset,\Zset,\digamma,\Gamma)$ of $\forall (\Gamma) \exists$\textsc{CNF-SAT}, i.e.
 two  disjoint sets of variables, $\Yset=\{y_1,\ldots,y_n\}$ and $\Zset=\{z_1,\ldots,z_n\}$,
 a  formula  $\digamma$ in CNF,  $\digamma=\mathcal{C}_1\wedge\cdots\wedge\mathcal{C}_m$,
 a budget~$\Gamma$.
 The reduction is divided into two steps.  
 
 In the first step,
 we obtain a digraph~$G=(V, A)$ from $(\Yset,\Zset,\digamma, \Gamma)$  by a standard transformation used to reduce \textsc{CNF-SAT} to \textsc{2-Vertex-Disjoint Paths} (see~\cite{FHW80}).
 For each variable $y_i\in \Yset$, we associate a variable subgraph consisting of only 
 two internally node-disjoint paths (the internal nodes:
 $y_i^{0},\ldots, {y}_i^{m}$ and $\overline{y}_i^{0},\ldots,\overline{y}_i^{m}$)
 of arc length~$m$, called the upper and lower paths. 
 If we go through the upper (lower) path, we simulate assigning $y_i:= 0$ ($y_i:= 1$).
 Similarly, we associate $n$ variable subgraphs
 with the variables in~$\Zset$.
 For each clause $\mathcal{C}_j$, $j\in[m]$,
    we add a clause gadget consisting in 3 arcs with the same terminals~$\mathcal{C}^{\text{in}}_j$ and $\mathcal{C}^{\text{out}}_j$.
 Each of these arcs represents a literal~$l_r\in \mathcal{C}_j$, $l_r\in \{y_r,\overline{y}_r, z_r,\overline{z}_r\}$.
 The literal arcs are connected to appropriate arcs in the variable gadgets,
 i.e.,
 if $l_r=y_r$ then we connect it to the~$(y_r^{j-1}, y_r^{j})$ arc
 and if $l_r=\overline{y}_r$ then we connect it to the~$(\overline{y}_r^{j-1}, \overline{y}_r^{j})$ arc.
 We do the same if $l_r=z_r$ or $l_r=\overline{z}_r$, $r\in [n]$, just for~$\Zset$ variable gadgets.
 The connecting of literal arcs and variable gadgets is carried on using \emph{switch} gadgets
 which have a blocking property that allows the path traversing the gadgets to encode a Boolean assignment.
 
 We show the switch gadget in Figure~\ref{fig:switch-gadget}.
 \begin{figure}[htbp]
 \begin{center}
 \includegraphics[width=\textwidth]{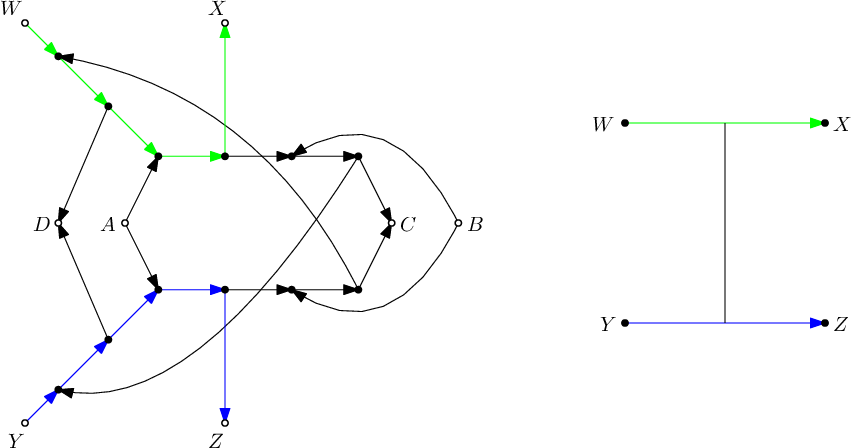}
 \caption{
     The switch gadget (left), with the property of blocking either the green path~$W-X$ or
     the blue path~$Y-Z$, and its simplified, symbolic depiction (right).
 }
 \label{fig:switch-gadget}
 \end{center}
 \end{figure}
 Each switch gadget has two entry nodes, denoted by $A$ and $B$, through which any path 
traversing the gadget may enter. It also contains two exit nodes, 
$C$ and $D$, and a path may use either of them to leave the gadget.
 The blocking property relies on the fact that if two vertex-disjoint paths traverse a switch gadget,
 one of them being an~$A-C$ path, the other a~$B-D$ path,
 then exactly one of the colored paths is unblocked
 (either the~$W-X$ or the~$Y-Z$ path), in the sense that some other vertex-disjoint path with respect to the paths can traverse it.
 The details of how this property is enforced can be found in \cite{FHW80}.
 We will use the schematic diagram of the gadget (right in Figure \ref{fig:switch-gadget}) to symbolize connecting two arcs using a switch gadget for readability of the final graph layout.
 The line connecting the colored arcs is not an arc but a symbolic depiction of the fact that the two arcs are connected, i.e.,
 they are expanded into the respective colored paths in a single switch gadgets.
 We now precisely describe the connecting of literal arcs and variable gadgets. 
 Consider a literal $l_r \in \mathcal{C}_j$ and the switch gadget 
associated with $l_r$. 
 If $l_r = y_r$, 
we replace the arc $(y_r^{j-1}, y_r^{j})$ with the path $W-X$ by 
identifying node $y_r^{j-1}$ with $W$ and node $y_r^{j}$ with $X$. 
Simultaneously, the literal arc corresponding to $l_r$ is replaced with 
the path $Y-Z$ by identifying node $\mathcal{C}^{\text{in}}_j$ with $Y$ 
and node $\mathcal{C}^{\text{out}}_j$ with $Z$. Similarly, if $l_r = \overline{y}_r$, we replace 
the arc $(\overline{y}_r^{j-1}, \overline{y}_r^{j})$ with the path $W-X$ 
and the corresponding literal arc with the path $Y-Z$ in an analogous 
manner.  We do the  same if $l_r=z_r$ or $l_r=\overline{z}_r$.

 Since there are is a total of~$3m$ literals in all clauses we need to use~$3m$ switches.
 The switches are linked in series by identifying the~$C$ and~$B$ nodes of a given gadget with the~$A$ and~$D$ nodes
 of the next gadget, respectively.
 We add the arc~$(D_{\text{first}}, y_1^{\text{in}})$ connecting the first switch with the first variable gadget
 and an arc~$(z_n^{\text{out}}, \mathcal{C}^{\text{in}}_1)$ connecting the last variable gadget with the first clause gadget.
 We set~$s_1 = B_{\text{last}}$, that is the~$B$ node of the last switch in series,
 and~$t_1 = \mathcal{C}^{\text{out}}_m$.
 Hence, an~$s_1$-$t_1$ path traverses all switches, all variable gadgets and all clause gadgets, in that order.
 We fix the second terminal pair $(s_2, t_2)$ as $s_2=A_{\text{first}}$ and $t_2=C_{\text{last}}$,
 which invokes the blocking property of the switches (the technical details are explained in \cite{FHW80}).
 The graph~$G$ is depicted in Figure~\ref{fadvgexl}.

 The validity of the reduction relies on the fact that~$\digamma$ is satisfiable if and only if $G$
 contains two node-disjoint simple paths, namely $\pi_1$ from $s_1$ to $t_1$ (the switch-variable-clause path)
 and $\pi_2$ from $s_2$ to $t_2$ (the switch path).
 Every satisfying 0-1 assignment $(\pmb{y},\pmb{z})$ of~$\digamma$
 corresponds to a simple paths $\pi_1$ which contains as a subpath the lower path of the~$y_r$ ($z_r$) variable,
 corresponding to the~$\overline{y}_r$ ($\overline{z}_r$) literal if and only if $y_r=1$ ($z_r=1$),  $r\in [n]$, in~$(\pmb{y},\pmb{z})$,
 since then any clause gadget corresponding to a clause satisfiable by setting $y_r=1$ ($z_r=1$) can be traversed using the upper path.
 To make sure the path traverses all variable and clause gadgets the switch gadgets were used.
 Therefore, the~$\pi_1$ path encodes the assignment and the switch path~$\pi_2$ enables the blocking property of the switches.

\begin{figure}[htbp]
\begin{center}
\includegraphics[width=\textwidth]{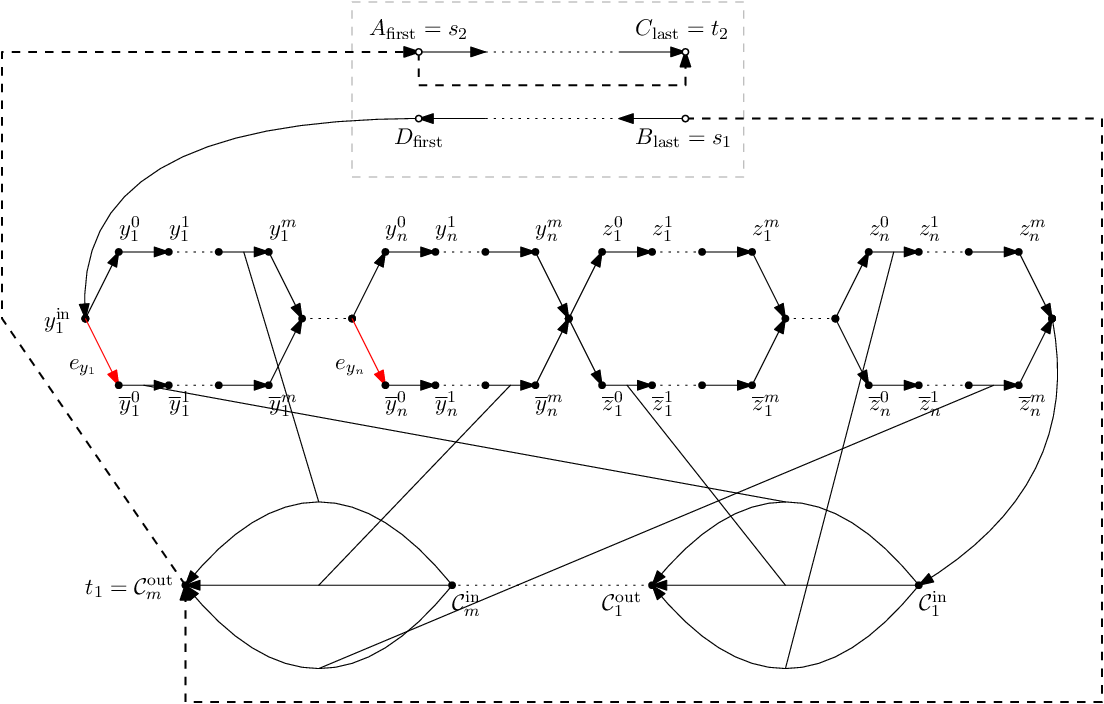}
\caption{A graph~$G$ in an instance of \textsc{Decision-Adv SP}$(\Gamma^d)$
         (with switch gadgets in the gray-dashed rectangle)
         corresponding to the formula $\digamma=(\overline{y}_1 \vee \overline{z}_1 \vee z_n)
         \wedge \cdots \wedge (y_1 \vee \overline{y}_n \vee \overline{z}_n)$.
         The graph layout resembles the presentation given in \cite{GW24}.
}
\label{fadvgexl}
\end{center}
\end{figure}

 In the second step we complete~$G=(V,A)$ by adding three arcs $(s_1,t_1)$,  $(t_1,s_2)$ and  $(s_2,t_2)$
 representing the path~$X\in\Phi$. We fix~$s:=s_1$ and $t:=t_2$ as the starting and destination nodes, respectively.
 In the arc set~$A$, we distinguish $n$ arcs labeled~$e_{y_i}$, $i\in[n]$,
 (see the arcs in red in Figure~\ref{fadvgexl}) and denote them by $A_{\pmb{y}}$.
The second stage arc costs under~$\widehat{\mathcal{U}}(\Gamma^d)$ are defined as follows:
$\hat{c}_e=0$ if $e\in A\setminus \{(s_1,t_1),(s_2,t_2)\}$,
$\hat{c}_e=1$ otherwise;
$\overline{c}_e=0$ if $e\in A\setminus \{(s_1,t_1),(s_2,t_2)\}\setminus A_{\pmb{y}}$,
$\overline{c}_e=1$ otherwise. Thus,
the uncertainty is only allowed on the arcs in $A_{\pmb{y}}$.
Finally, we set:
  the budget~$\Gamma^d:=\Gamma$, the  recovery parameter~$k:=2$ and
the bound~$\gamma:=0$. This completes the reduction.

What is left to show is that an
instance $(\Yset,\Zset,\digamma,\Gamma)$  of $\forall (\Gamma) \exists$\textsc{CNF-SAT}
 is a Yes-instance  if and only if  the built instance $(G,X,\widehat{\mathcal{U}}(\Gamma^d),k=2,\gamma=0)$
 of \textsc{Decision-Adv SP}$(\Gamma^d)$ with $\Phi^{\mathrm{excl}}(X,k)$ is a Yes-instance.

 ($\Rightarrow$)
 Assume that an instance $(\Yset,\Zset,\digamma,\Gamma)$ is a Yes-instance.
 Consider any scenario $\pmb{c}\in \widehat{\mathcal{U}}(\Gamma^d)$. Let $A'_{\pmb{y}}\subseteq A_{\pmb{y}}$ be the subset of arcs in $A_{\pmb{y}}$ whose costs under $\pmb{c}$ are at their upper bounds equal to~1.  Of course, $|A'_{\pmb{y}}|\leq \Gamma^d$.
 Let $\Yset'=\{y_i\, :\, e_{y_i}\in A'_{\pmb{y}}\}\subseteq \Yset$. Thus,
 $|\Yset'|=|A'_{\pmb{y}}|\leq  \Gamma^d=\Gamma$.
 Since $(\Yset,\Zset,\digamma,\Gamma)$ is a Yes-instance,
 there exists a 0-1 assignment $(\pmb{y},\pmb{z})$ in $\Yset\cup \Zset$,
 in which  $\pmb{y}|_{\Yset'}=\pmb{0}$, satisfying~$\digamma(\pmb{y},\pmb{z})$.
 Hence $(\pmb{y},\pmb{z})$ corresponds to a simple path~$\pi_1$ (the variable-clause path)
 together with a path~$\pi_2$ (the switch path), together node-disjoint,
 from~$s_1$ to~$t_1$ and $s_2$ to~$t_2$, respectively. 
 The paths $\pi_1$ and $\pi_2$ together with the arc~$(t_1,s_2)$ form a simple path~$Y\in \Phi^{\mathrm{excl}}(X,2)$.
 Observe that the path~$\pi_1$ does not traverse the~$A'_{\pmb{y}}$ arcs, as it uses only arcs that correspond to true literals
 while traversing the variable gadgets so that the negative literal arcs are available when traversing the clause gadgets
 (corresponding to choosing the negative literals for~$\Yset'$ variables).
 Thus, $\sum_{e \in Y} c_e=0$.
 This shows that $(G,X,\widehat{\mathcal{U}}(\Gamma^d),k=2,\gamma=0)$  is a Yes-instance.

 ($\Leftarrow$)
 Assume that an instance $(G,X,\widehat{\mathcal{U}}(\Gamma^d),k=2,\gamma=0)$ is a Yes-instance.
 Consider any $\Yset'\subseteq \Yset$ such that $|\Yset'|\leq \Gamma$.
 Let $A'_{\pmb{y}}=\{e_{y_i}\, :\, y_i\in \Yset'\}\subseteq A_{\pmb{y}}$.
 Obviously, $|A_{\pmb{y}}|\leq  \Gamma= \Gamma^d$.
 Consider scenario $\pmb{c}\in \widehat{\mathcal{U}}(\Gamma^d)$ defined as follows:
 $c_e:=\overline{c}_e$ if $e\in A'_{\pmb{y}}$, $c_e:=\hat{c}_e$ if $e\in A\setminus  A'_{\pmb{y}}$.
 Since $(G,X,\widehat{\mathcal{U}}(\Gamma^d),k=2,\gamma=0)$ is Yes-instance,
 there exists a simple path $Y\in \Phi(X,2)$ such that $\sum_{e\in Y} c_e=0$.
 Hence, no arc from $A'_{\pmb{y}} \cup \{(s_1,t_1),(s_2,t_2)\}$ belongs to~$Y$.
 It must traverse $(t_1,s_2)$ and contain as subpaths
 two simple node-disjoint paths~$\pi_1$ (the variable-clause path)  and~$\pi_2$
 (the switch path)  
 from~$s_1$ to~$t_1$ and $s_2$ to~$t_2$, respectively. 
 The variable-clause path~$\pi_1$ defines  the 0-1 assignment $(\pmb{y},\pmb{z})$, namely,
 $y_i:=1$ ($z_i:=1$)  if~$\pi_1$ traverses the negative literal subpath of the $y_i$ ($z_i$) variable gadget; 
 $y_i:=0$ ($z_i:=0$)  if~$\pi_1$ traverses the positive literal subpath of the $y_i$ ($z_i$) variable gadget.
 Clearly, $\pmb{y}|_{\Yset'}=\pmb{0}$.
 The paths $\pi_1$ and $\pi_2$ are node-disjoint. Thus,
 the blocking property of the switch gadgets guarantees that the variable-clause path~$\pi_1$
 traverses all variable and clause gadgets, i.e., all clauses are satisfied.
 Therefore, the 0-1 assignment $(\pmb{y},\pmb{z})$ satisfies~$\digamma(\pmb{y},\pmb{z})$
 and $(\Yset,\Zset,\digamma,\Gamma)$ is a Yes-instance.
\end{proof}

By Observation~\ref{obext} and Theorem~\ref{dadexcl}, we get the following corollary:
\begin{cor}
\label{cdadexcl}
\textsc{Adv SP} under $\mathcal{U}(\Gamma^d)$
 with the neighborhood $\Phi^{\mathrm{excl}}(X,k)$
 is $\Pi^p_2$-hard and $\Pi^p_2$-hard to approximate, even if $k=2$.
\end{cor}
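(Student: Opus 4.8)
The plan is to obtain both assertions as consequences of Theorem~\ref{dadexcl} and Observation~\ref{obext}, with almost no additional work. First, for the $\Pi^p_2$-hardness I would note that the natural decision version of the optimization problem \textsc{Adv SP} under $\widehat{\mathcal{U}}(\Gamma^d)$ — deciding whether $\max_{\pmb{c}\in\widehat{\mathcal{U}}(\Gamma^d)}\min_{Y\in\Phi^{\mathrm{excl}}(X,k)}\sum_{e\in Y}c_e\leq\gamma$ — coincides precisely with \textsc{Decision-Adv SP}$(\Gamma^d)$. Theorem~\ref{dadexcl} states that the latter is $\Pi^p_2$-complete already for $k=2$, hence $\Pi^p_2$-hard; since any algorithm evaluating the optimal value of \textsc{Adv SP} answers this decision question by a single comparison with $\gamma$, the optimization problem is $\Pi^p_2$-hard. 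Observation~\ref{obext} guarantees that the max-min value over $\mathcal{U}(\Gamma^d)$ equals the one over $\widehat{\mathcal{U}}(\Gamma^d)$, so the hardness transfers directly to \textsc{Adv SP} under $\mathcal{U}(\Gamma^d)$.

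Second, for the inapproximability I would exploit the fact that the reduction in Theorem~\ref{dadexcl} uses the threshold $\gamma=0$ and purely $0/1$ arc costs. Therefore the optimal value $\mathrm{OPT}$ of the constructed \textsc{Adv SP} instance is $0$ exactly for Yes-instances of $\forall(\Gamma)\exists$\textsc{CNF-SAT} and, by integrality of the costs, satisfies $\mathrm{OPT}\geq 1$ for No-instances. Because \textsc{Adv SP} is a maximization problem with a nonnegative objective, a $\rho$-approximation would return the value $0$ whenever $\mathrm{OPT}=0$ and a strictly positive value (at least $1/\rho$) whenever $\mathrm{OPT}\geq 1$, for any finite $\rho$. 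Such an algorithm would separate the Yes- and No-instances and thus solve a $\Pi^p_2$-hard problem, which shows that \textsc{Adv SP} admits no approximation within any multiplicative factor; by Observation~\ref{obext} the same conclusion holds under $\mathcal{U}(\Gamma^d)$.

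The only point that needs care is the inapproximability step: one must verify that in every No-instance the optimal value is bounded away from $0$ — here it is at least $1$ because the costs are integral — so that the gap between $0$ and the positive regime cannot be bridged by any bounded ratio. The remaining reasoning is a routine transfer of decision hardness to the optimization setting using the value identity of Observation~\ref{obext}.
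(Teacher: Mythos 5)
Your proposal is correct and follows exactly the route the paper intends: the corollary is stated as an immediate consequence of Observation~\ref{obext} and Theorem~\ref{dadexcl}, and your write-up simply makes explicit the two implicit steps (the decision version of the optimization problem coincides with \textsc{Decision-Adv SP}$(\Gamma^d)$, and the $0$ versus $\geq 1$ gap created by the $\gamma=0$, integral-cost reduction rules out any multiplicative approximation). No gaps; the care you take with the No-instance lower bound of $1$ is precisely what justifies the inapproximability claim.
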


\begin{thm}
\textsc{Decision-Adv SP}$(\Gamma^d)$ 
 with the neighborhood $\Phi(X,k):=\Phi^{\mathrm{sym}}(X,k)$
 is $\Pi^p_2$-complete.
 \label{dadsym}
\end{thm}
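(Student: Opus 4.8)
The plan is to follow the reduction behind Theorem~\ref{dadexcl} as closely as possible, keeping its graph $G$ and cost pattern and changing essentially only the recovery radius. Membership in $\Pi^p_2$ is already supplied by Lemma~\ref{lpi2p}, so I would again argue $\Pi^p_2$-hardness by reducing from $\forall (\Gamma) \exists$\textsc{CNF-SAT}. The underlying graph is the same \textsc{2-Vertex-Disjoint Paths} gadget built from $(\Yset,\Zset,\digamma)$, completed by the three arcs $(s_1,t_1),(t_1,s_2),(s_2,t_2)$ that form $X$, with $s:=s_1$, $t:=t_2$, the distinguished arcs $A_{\pmb y}$, the same $0/1$ costs (uncertainty only on $(s_1,t_1),(s_2,t_2)$ and $A_{\pmb y}$), budget $\Gamma^d:=\Gamma$ and bound $\gamma:=0$.

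The one parameter that must be reset is $k$, because the symmetric-difference neighborhood now charges for the \emph{added} arcs. The canonical second-stage path $Y=\pi_1\cup\{(t_1,s_2)\}\cup\pi_2$ sits at distance $|(Y\setminus X)\cup(X\setminus Y)|=|\pi_1|+|\pi_2|+2$ from $X$: it drops the two expensive arcs and adds the $|\pi_1|+|\pi_2|$ arcs of the clause and variable paths. The point that makes a \emph{single} radius usable is that this distance is the same for every satisfying assignment, since the upper and lower path of each variable subgraph have the common arc length $m+1$ and the clause path always uses exactly two literal arcs per clause; hence $|\pi_1|=3m+1$ and $|\pi_2|$ are fixed, formula-determined numbers. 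I would therefore set $k:=K$ with $K:=|\pi_1|+|\pi_2|+2$, which is polynomially bounded. With this choice the ($\Rightarrow$) direction carries over verbatim from Theorem~\ref{dadexcl}: given $\Yset'$ of cardinality at most $\Gamma$, a satisfying assignment with $\pmb y|_{\Yset'}=\pmb 0$ yields node-disjoint $\pi_1,\pi_2$, the associated $Y$ uses no arc of $A'_{\pmb y}$, costs $0$, and lies in $\Phi^{\mathrm{sym}}(X,K)$ because its symmetric distance is exactly $K$.

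The ($\Leftarrow$) direction is where the argument genuinely differs, and it is the step I expect to be the main obstacle. In Theorem~\ref{dadexcl} the radius $k=2$ did the decisive work: a cost-$0$ path must drop both $(s_1,t_1)$ and $(s_2,t_2)$, and with $|X\setminus Y|\le 2$ it could not also drop the bridge $(t_1,s_2)$, so retaining the bridge forced the decomposition into a clause path and a variable path. Under the symmetric-difference neighborhood the radius is necessarily large ($K$), and it no longer suppresses bridge-bypassing paths. Indeed, a path that leaves some $\mathcal{C}^{(1)}_j$ by a literal arc into a variable subgraph and then simply runs forward to $t_2$ never uses $(t_1,s_2)$; choosing a literal that enters an upper path (for a $y$-variable) or any $z$-lower path avoids every arc of $A_{\pmb y}$, so such a path costs $0$ under \emph{every} scenario, and its symmetric distance $|Y|+3$ is typically well below $K$. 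A single such shortcut would make the constructed instance a Yes-instance independently of $\digamma$, so ruling these paths out is exactly the content of the proof.

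The remedy I would pursue is to make the bridge $(t_1,s_2)$ \emph{mandatory} for every cost-$0$ path, i.e.\ to arrange that $t_2$ is reachable at cost $0$ only by entering the variable region at $s_2$, while leaving the canonical $Y$ at cost $0$ and at symmetric distance exactly $K$. Concretely, one redesigns the variable/literal gadget so that (i) the conflict semantics are preserved --- using a literal in the clause path still blocks the corresponding node of the variable path, encoding the truth value --- yet (ii) no traversal that is \emph{started} by a literal arc can progress forward to $t_2$ without first returning to the clause path and crossing $(t_1,s_2)$, any forward continuation initiated inside the variable region being forced to pay a nominal cost $1$. Once the bridge is mandatory, the structural analysis of Theorem~\ref{dadexcl} applies unchanged: a cost-$0$ path in $\Phi^{\mathrm{sym}}(X,K)$ avoids $(s_1,t_1)$, $(s_2,t_2)$ and $A'_{\pmb y}$, uses $(t_1,s_2)$, and splits into node-disjoint clause and variable paths, from which a satisfying assignment with $\pmb y|_{\Yset'}=\pmb 0$ is read off as before. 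Certifying that this gadget redesign simultaneously forces the bridge, keeps the canonical $Y$ feasible and cost-$0$, and introduces no new cost-$0$ paths is the delicate point on which the whole reduction turns.
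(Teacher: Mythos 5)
You correctly identify the central obstacle in adapting Theorem~\ref{dadexcl} to the symmetric-difference neighborhood: once $k$ is forced to be large, a cost-$0$ second-stage path can enter a variable subgraph via a literal arc and run forward to $t_2$, bypassing the bridge $(t_1,s_2)$ entirely, and with $X$ consisting of only three arcs such a path has small symmetric difference and would make the instance a Yes-instance regardless of $\digamma$. However, your proposed remedy --- redesigning the variable/literal gadgets so that any forward continuation initiated inside the variable region pays a nominal cost of $1$ --- is left entirely unspecified, and you yourself flag its verification as ``the delicate point on which the whole reduction turns.'' As it stands, the ($\Leftarrow$) direction is therefore not proved; the proposal is a correct diagnosis plus an unimplemented plan, not a proof.

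The paper closes this gap by a different and much simpler device, which you miss: it does not touch the clause/variable gadgets at all, but replaces the single bridge arc $(t_1,s_2)$ by a \emph{path} of $(m+1)n+3m+1+2$ arcs and sets $k$ equal to that same number. The first-stage path $X$ then consists of $(s_1,t_1)$, this long bridge, and $(s_2,t_2)$. The canonical second-stage path (clause path, bridge, variable path) drops exactly the two unit-cost arcs and adds exactly $|\pi_1|+|\pi_2|=(m+1)n+3m+1$ arcs, so its symmetric difference is exactly $k$; whereas any path that omits the bridge already loses all $(m+1)n+3m+3$ bridge arcs from $X$ and must add at least one new arc, so its symmetric difference exceeds $k$. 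Since the bridge's internal nodes have in- and out-degree one, a path uses either all of the bridge or none of it, so every feasible cost-$0$ path is forced through the bridge and the structural analysis of Theorem~\ref{dadexcl} then applies verbatim. In short: the budget-exhaustion role played by $k=2$ in the exclusion case is transferred to the \emph{length of the first-stage bridge}, not to a modification of the gadgets.
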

\begin{proof}
Lemma~\ref{lpi2p} shows that the problem under consideration lies in $\Pi^p_2$.
The proof is completed by showing that
it is $\Pi^p_2$-hard.
In the following, we provide a reduction  from $\forall (\Gamma) \exists$\textsc{CNF-SAT}
to \textsc{Decision-Adv SP}$(\Gamma^d)$ 
 with $\Phi^{\mathrm{sym}}(X,k)$.
 
Given  an instance $(\Yset,\Zset,\digamma,\Gamma)$ of $\forall (\Gamma) \exists$\textsc{CNF-SAT}
we construct the instance of  \textsc{Decision-Adv SP}$(\Gamma^d)$ 
 in much the same way as in the proof of Theorem~\ref{dadexcl} --- the only difference is in the second step of the construction. In the second step, 
we add to~$G=(V,A)$ two arcs $(s_1,t_1)$, $(s_2,t_2)$,
  and instead of  $(t_1,s_2)$ (see Figure~\ref{fadvgexl}, the dashed arc)
  we add a path connecting~$t_1$ with~$s_2$ consisting of $|A|$ arcs.
  There are~$3m$ switches and each switch contains 24 arcs (see Figure \ref{fig:switch-gadget});
  $2n$~variable gadgets with $2m + 4$ arcs each and two additional arcs: one connecting the series of switches and variables,
  the other connecting the variables and clauses.
  All literal arcs in clause gadgets are expanded into paths inside switches, and the same happens for~$3m$ out of~$4nm$ literal arcs in variable gadgets.
  Therefore, we have $|A| = 2n(2m + 4) + 72m + 2 - 3m = 4nm + 8n + 69m + 2$.
  We fix~$s:=s_1$ and $t:=t_2$ as the starting and the destination nodes, respectively.
  The added arcs represent the path~$X\in\Phi$.
  The set $A_{\pmb{y}}$ contains $n$ arcs labeled by~$y_i$, $i\in[n]$
 (see Figure~\ref{fadvgexl}, the red arcs). 
The second stage arc costs under~$\widehat{\mathcal{U}}(\Gamma^d)$ are defined as follows:
$\hat{c}_e=0$ if $e\in A\setminus \{(s_1,t_1),(s_2,t_2)\}$,
$\hat{c}_e=1$ otherwise;
$\overline{c}_e=0$ if $e\in A\setminus \{(s_1,t_1),(s_2,t_2)\}\setminus A_{\pmb{y}}$,
$\overline{c}_e=1$ otherwise. Finally, we set:  
  the budget~$\Gamma^d:=\Gamma$, the  recovery parameter~$k:=|A| + 2$, and
the bound~$\gamma:=0$. This completes the instance~$(G,X,\widehat{\mathcal{U}}(\Gamma^d),k,\gamma)$.

As in the proof of Theorem~\ref{dadexcl}, we can prove the following:
 an
instance $(\Yset,\Zset,\digamma,\Gamma)$  of $\forall (\Gamma) \exists$\textsc{CNF-SAT}
 is a Yes-instance  if and only if  the built instance $(G,X,\widehat{\mathcal{U}}(\Gamma^d),k,\gamma)$
 of \textsc{Decision-Adv SP}$(\Gamma^d)$ with $\Phi^{\mathrm{sym}}(X,k)$
  is a Yes-instance.
\end{proof}
Again, we have the following corollary, by Observation~\ref{obext} and Theorem~\ref{dadsym}.
\begin{cor}
\label{cdadsym}
\textsc{Adv SP} under $\mathcal{U}(\Gamma^d)$
 with the neighborhood $\Phi^{\mathrm{sym}}(X,k)$
 is $\Pi^p_2$-hard  and  $\Pi^p_2$-hard to approximate.
\end{cor}

\section{Hardness of the recoverable robust problem}

To characterize the complexity of \textsc{Rec Rob SP}, we will use the following problem:

\begin{description}
\item{$\exists \forall (\Gamma) \exists$3\textsc{CNF-SAT}}
\item{Input:} Three disjoint sets $\Xset$, $\Yset$, and $\Zset$ of Boolean variables that take the values 0 and 1,
a Boolean formula $\digamma(\Xset,\Yset,\Zset)$ in the conjunctive normal form, where every clause consists of
three literals,
 an integer~$\Gamma$.
\item{Question:} Is it true that
there is an 0-1~assignment $\pmb{x}$ of the variables in~$\Xset$ such that
for all subsets $\Yset'\subseteq \Yset$ such that $|\Yset'|\leq \Gamma$,
there exists  0-1~assignment~$(\pmb{y},\pmb{z})$
of variables in~$Y$ and $Z$ in which  all variables in $\Yset'$ are set to~0
($\pmb{y}|_{\Yset'}=\pmb{0}$)
 and
$\digamma(\pmb{x},\pmb{y},\pmb{z})$ is satisfied?
\end{description}
The problem
$\exists \forall (\Gamma) \exists$3\textsc{CNF-SAT} is $\Sigma^p_3$-complete, even if 
$|\Xset|=|\Yset|=|\Zset|$~\cite[Theorem~1]{GLW24}.
Consider the following decision version of \textsc{Rec Rob SP}:
\begin{description}
\item{\textsc{Decision-Rec Rob SP}$(\Gamma^d)$}
\item{Input:}  A digraph~$G=(V,A)$, first stage arc costs $\pmb{C}=(C_e)_{e\in A}\geq \pmb{0}$,
 second stage arc costs under the discrete budgeted
uncertainty~$\widehat{\mathcal{U}}(\Gamma^d)$, i.e.,
nominal costs~$\hat{\pmb{c}}=(\hat{c}_e)_{e\in A}\geq \pmb{0}$, upper bounds on the arc costs~$\overline{\pmb{c}}=(\overline{c}_e)_{e\in A}\geq \hat{\pmb{c}}$, an integer budget~$\Gamma^d$,
recovery parameter~$k$, a bound~$\gamma\geq 0$.
\item{Question:} 
 Is there a path $X\in \Phi$ such that for every $\pmb{c}\in \widehat{\mathcal{U}}(\Gamma^d)$  there exists
a  path $Y\in \Phi(X,k)$ such that $\sum_{e\in X}C_e+\sum_{e\in Y} c_e\leq \gamma$?
 \end{description}

\begin{lem}
\textsc{Decision-Rec Rob SP}$(\Gamma^d)$ with 
$ \Phi(X,k)\in \{\Phi^{\mathrm{excl}}(X,k),
\Phi^{\mathrm{sym}}(X,k)\}$ is in the class $\Sigma^p_3$.
\label{lrrpi2p}
\end{lem}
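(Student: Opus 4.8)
The plan is to exhibit a polynomial-time alternating (or, equivalently, oracle) procedure for \textsc{Decision-Rec Rob SP}$(\Gamma^d)$ whose quantifier structure matches the definition of $\Sigma^p_3$, namely an existential guess followed by a $\Pi^p_2$ verification. Reading directly from the problem statement, the question asks whether
\[
(\exists\, X\in\Phi)\,(\forall\, \pmb{c}\in\widehat{\mathcal{U}}(\Gamma^d))\,(\exists\, Y\in\Phi(X,k))\ \Bigl(\sum_{e\in X}C_e+\sum_{e\in Y}c_e\leq\gamma\Bigr).
\]
First I would observe that a first-stage path $X\in\Phi$ is a simple $s$--$t$ path and hence can be encoded by a subset of $A$ of size at most $|A|$; likewise a scenario $\pmb{c}\in\widehat{\mathcal{U}}(\Gamma^d)$ is determined by the choice vector $(\delta_e)_{e\in A}\in\{0,1\}^{|A|}$ subject to $\sum_e\delta_e\le\Gamma^d$, and a second-stage path $Y$ is again a subset of $A$. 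All three objects therefore have polynomial length in the input size, so each quantifier ranges over strings of polynomially bounded length, which is exactly the syntactic requirement for the polynomial hierarchy.

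Next I would check that the innermost matrix is polynomial-time decidable. Given concrete $X$, $\pmb{c}$, and $Y$, one verifies in polynomial time that $X\in\Phi$ (a simple $s$--$t$ path), that $Y\in\Phi$, that $Y\in\Phi(X,k)$ for the relevant neighborhood (for $\Phi^{\mathrm{excl}}$ one checks $|X\setminus Y|\le k$ and for $\Phi^{\mathrm{sym}}$ one checks $|(Y\setminus X)\cup(X\setminus Y)|\le k$, both mere set-cardinality computations), that $\pmb{c}$ indeed lies in $\widehat{\mathcal{U}}(\Gamma^d)$ (i.e. each $c_e\in\{\hat c_e,\overline c_e\}$ and the number of raised arcs is at most $\Gamma^d$), and finally that the arithmetic inequality $\sum_{e\in X}C_e+\sum_{e\in Y}c_e\le\gamma$ holds. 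Each of these is a trivial polynomial-time test.

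Having established both the polynomial bound on the quantified strings and the polynomial-time matrix, I would conclude by pattern-matching the prefix $\exists\,\forall\,\exists$ against the definition of the third level of the polynomial hierarchy: a language is in $\Sigma^p_3$ precisely when membership can be written as $(\exists\, u_1)(\forall\, u_2)(\exists\, u_3)\,R(w,u_1,u_2,u_3)$ with $u_1,u_2,u_3$ of length polynomial in $|w|$ and $R$ polynomial-time decidable. Equivalently, I would note that by Lemma~\ref{lpi2p} the inner predicate $(\forall\,\pmb{c})(\exists\,Y)(\dots)$ defines, for fixed $X$, a $\Pi^p_2$ language, so the whole problem is an existential guess of $X$ feeding a $\Pi^p_2$ oracle, i.e. it lies in $\mathrm{NP}^{\Pi^p_2}=\Sigma^p_3$. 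I expect no genuine obstacle here: the only points that need care are confirming that the guessed path $X$ has polynomial size and that the neighborhood membership and scenario membership are polynomially checkable, both of which follow immediately from the set-cardinality definitions in \eqref{excl} and \eqref{sym} and from the finite support description of $\widehat{\mathcal{U}}(\Gamma^d)$.
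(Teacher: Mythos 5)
Your proposal is correct and follows exactly the same route as the paper: the paper's proof simply writes the problem in the form $(\exists\, X\in \Phi)(\forall \,\pmb{c}\in \widehat{\mathcal{U}}(\Gamma^d))(\exists\, Y\in \Phi(X,k))(\sum_{e\in X}C_e+\sum_{e\in Y} c_e\leq \gamma)$ and concludes membership in $\Sigma^p_3$. Your additional checks that the quantified objects have polynomial encoding length and that the matrix is polynomial-time decidable are exactly the details the paper leaves implicit.
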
 
\begin{proof}
Indeed, \textsc{Decision-Rec Rob SP}$(\Gamma^d)$ can be expressed in the form:
 $(\exists\, X\in \Phi)(\forall \,\pmb{c}\in \widehat{\mathcal{U}}(\Gamma^d))(\exists\, Y\in \Phi(X,k))
(\sum_{e\in X}C_e+\sum_{e\in Y} c_e\leq \gamma)$.
\end{proof}


\begin{thm}
\textsc{Decision-Rec Rob SP}$(\Gamma^d)$ 
 with the neighborhood $\Phi(X,k):=\Phi^{\mathrm{excl}}(X,k)$
 is $\Sigma^p_3$-complete, even if $k=3$.
 \label{drrexcl}
\end{thm}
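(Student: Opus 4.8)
Membership in $\Sigma^p_3$ is given by Lemma~\ref{lrrpi2p}, so I would prove $\Sigma^p_3$-hardness by reducing from $\exists\forall(\Gamma)\exists$3\textsc{CNF-SAT}. The guiding observation is that the construction of Theorem~\ref{dadexcl} already encodes the two inner quantifiers $(\forall\,\Yset')(\exists\,(\pmb{y},\pmb{z}))$ through the adversary's scenario and the second-stage path, so the only missing ingredient is the outer $\exists\,\pmb{x}$, which I would realise by the choice of the first-stage path $X\in\Phi$. The plan is to prepend to the gadget of Theorem~\ref{dadexcl} a layer of $\Xset$-variable subgraphs that every $s$-$t$ path must cross, and to use first-stage costs to make crossing them the only admissible behaviour of $X$, so that selecting $X$ is tantamount to selecting an assignment $\pmb{x}$.

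Concretely, I would reuse the gadget of Theorem~\ref{dadexcl} verbatim for $\Yset$, $\Zset$ and the clauses — the clause nodes $\mathcal{C}^{(1)}_j,\mathcal{C}^{(2)}_j$, the $\Yset,\Zset$-variable subgraphs, the clause-path terminals $s_1,t_1$ and the variable-path terminals $s_2,t_2$, together with the three arcs $(s_1,t_1)$, $(t_1,s_2)$, $(s_2,t_2)$ — except that each clause now has three literals and may contain an $\Xset$-literal, whose two literal arcs point into the appropriate internal node of an $\Xset$-subgraph exactly as for $\Yset,\Zset$. I then add $n$ variable subgraphs for $x_1,\dots,x_n$ chained in series from a fresh source $s$ to $s_1$, each built from an upper path ($x_i=0$) and a lower path ($x_i=1$) of arc length $m+1$ whose internal nodes $\overline{x}_i^{\mathcal{C}_j}$ and $x_i^{\mathcal{C}_j}$ receive the $\Xset$-literal arcs, and I set $t:=t_2$. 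The second-stage costs are exactly those of Theorem~\ref{dadexcl}: the uncertainty is confined to the $n$ arcs $A_{\pmb{y}}$, the arcs $(s_1,t_1)$ and $(s_2,t_2)$ have deterministic cost $1$, and all remaining arcs cost $0$; I keep $\Gamma^d:=\Gamma$, $k:=2$, $\gamma:=0$. The new element is the first-stage cost, which I set to $C_e:=0$ on the $\Xset$-subgraph arcs and on $(s_1,t_1),(t_1,s_2),(s_2,t_2)$, and to $C_e:=1$ elsewhere. Because $\gamma=0$ forces $\sum_{e\in X}C_e=0$, any admissible $X$ uses only zero-cost arcs; since the $\Xset$-subgraphs are the only zero-cost way out of $s$ and are chained in series, $X$ must traverse all of them (choosing an upper or lower path, and hence a value of $x_i$, in each) and then follow $s_1\to t_1\to s_2\to t_2$. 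Admissible first-stage paths are thus in bijection with assignments $\pmb{x}$.

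The crux is that the exclusion neighbourhood with $k=2$ freezes $\pmb{x}$ into the second stage. For an admissible $X$, a response $Y\in\Phi^{\mathrm{excl}}(X,2)$ may drop at most two arcs of $X$; but $(s_1,t_1)$ and $(s_2,t_2)$ cost $1$ under every scenario, so a cost-$0$ response must drop exactly these two, which exhausts the budget. Hence $Y$ keeps $(t_1,s_2)$ and the entire $\Xset$-routing of $X$ — no deletion remains to reroute any $x_i$, so the first-stage assignment cannot be altered — and must reconnect $s_1$ to $t_1$ through the clause gadget and $s_2$ to $t_2$ through the $\Yset,\Zset$ subgraphs, the freshly used arcs being cost-free in the exclusion neighbourhood. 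As in Theorem~\ref{dadexcl}, simplicity of $Y$ forces the clause path to choose, for every clause, one literal node that is free: for $\Yset,\Zset$-literals freeness is against the variable path $\pi_2$ and records the truth values of $(\pmb{y},\pmb{z})$, whereas for $\Xset$-literals it is against the locked $\Xset$-routing and records those of $\pmb{x}$ (a positive literal $x_i$ being usable precisely when the routing took the lower path, i.e.\ $x_i=1$, and dually for $\overline{x}_i$). Letting the scenario encode $\Yset'$ by raising the costs of the arcs of $A_{\pmb{y}}$ indexed by $\Yset'$, a cost-$0$ response exists if and only if some $(\pmb{y},\pmb{z})$ with $\pmb{y}|_{\Yset'}=\pmb{0}$ satisfies $\digamma(\pmb{x},\pmb{y},\pmb{z})$. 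Reading off the quantifiers $\exists X$ (i.e.\ $\exists\,\pmb{x}$), $\forall\,\pmb{c}$ (i.e.\ $\forall\,\Yset'$ with $|\Yset'|\le\Gamma$) and $\exists Y$ (i.e.\ $\exists\,(\pmb{y},\pmb{z})$) yields the desired equivalence, both implications being adaptations of the two halves of the proof of Theorem~\ref{dadexcl}.

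I expect the main difficulty to lie in the tightness of $k=2$: the budget must be large enough to delete both unit-cost shortcuts (so that a zero-cost reroute exists whenever the formula is satisfiable) yet, being entirely consumed by those two forced deletions, leave nothing to modify the $\Xset$-layer (so that the adversary genuinely faces the single assignment $\pmb{x}$ fixed by $X$). The second point needing care is that one simple $s$-$t$ path $Y$ can read $\Xset$-literals against the frozen routing and $\Yset,\Zset$-literals against $\pi_2$ simultaneously, and that the first-stage costs really do put admissible $X$'s in bijection with the assignments $\pmb{x}$, with no cheaper bypass of the $\Xset$-layer. Once these are established, the equivalence follows the pattern of Theorem~\ref{dadexcl} and the reduction is clearly polynomial, giving $\Sigma^p_3$-completeness even for $k=2$.
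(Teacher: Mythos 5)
Your proposal is correct and follows essentially the same approach as the paper: membership via Lemma~\ref{lrrpi2p}, a reduction from $\exists\forall(\Gamma)\exists$3\textsc{CNF-SAT} that extends the gadget of Theorem~\ref{dadexcl} with $\Xset$-variable subgraphs encoding the outer existential quantifier through the first-stage path, zero first-stage costs forcing $X$ onto the $\Xset$-layer plus the three shortcut arcs, and the budget $k=2$ being exhausted by dropping $(s_1,t_1)$ and $(s_2,t_2)$ so that the $\Xset$-routing is frozen. The only (immaterial) difference is that you prepend the $\Xset$-layer before $s_1$ from a fresh source, whereas the paper appends it after $t_2$ toward a new terminal $t'_2$.
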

\begin{proof}
From Lemma~\ref{lrrpi2p} it follows that \textsc{Decision-Rec Rob SP}$(\Gamma^d)$
with $\Phi^{\mathrm{excl}}(X,k)$ is in~$\Sigma^p_3$.
We now prove that it is $\Sigma^p_3$-hard.

Let $(\Xset,\Yset,\Zset,\digamma,\Gamma)$ be an instance of $\exists \forall (\Gamma) \exists$3\textsc{CNF-SAT}.
In much the same way as in the proof of Theorem~\ref{dadexcl}, we construct 
an instance $(G,\pmb{C},\widehat{\mathcal{U}}(\Gamma^d),k,\gamma)$
of \textsc{Decision-Rec Rob SP}$(\Gamma^d)$. Small changes are in the first and second steps.

In the first step, we additionally associate $n$ variable subgraphs (see Figure~\ref{frobgexl}) with the variables in~$\Xset$, 
which results in adding them to the graph~$G$ depicted in Figure~\ref{fadvgexl}.
Then, for each clause, we additionally add to~$G$ the literal arcs corresponding to~$x_i$ or $\overline{x}_i$, $i\in[n]$ (in the same way as for the variables in $\mathbb{Y}$ and $\mathbb{Z}$).
Now the variable-clause path~$\pi_1$ must also traverse the~$\mathbb{X}$ variables gadgets
and can traverse clause gadgets using literal arcs associated with these variables.
\begin{figure}[htbp]
\begin{center}
\includegraphics{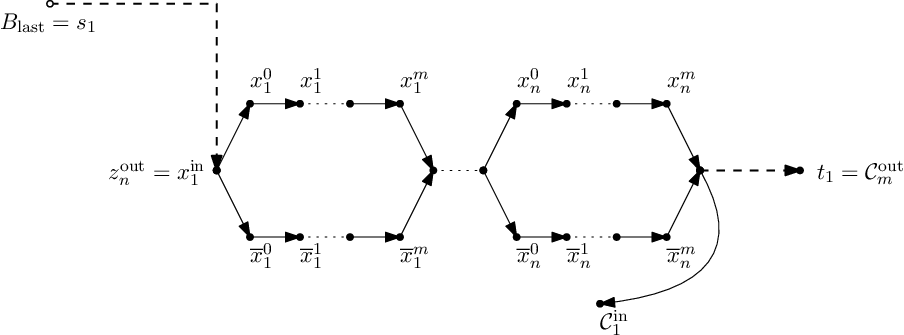}
    \caption{The placement of the $n$ $\pmb{x}$-variable subgraphs with respect to other components in Figure \ref{fadvgexl}.}
\label{frobgexl}
\end{center}
\end{figure}

In the second step, instead of three
    there are now four dashed arcs: $(s_1, x_1^{\text{in}})$ and $(x_n^{\text{out}}, t_1)$ and~$(t_1, s_2)$, $(s_2, t_2)$ as they were.
This finishes building~$G=(V,A)$.
We fix~$s:=s_1$ and $t:=t_2$ as the starting and the destination nodes, respectively.
The first stage costs are defined as follows:
$C_e=0$ if  $e\in  \{(s_1,x_1^{\text{in}}),(x_n^{\text{out}}, t_1),(t_1,s_2),(s_2,t_2)\}$
or $e$ is an arc of the $\pmb{x}$-variable subgraphs 
(see Figure~\ref{frobgexl}). The rest of the arcs have the first stage costs equal to~1.
The second stage arc costs under~$\widehat{\mathcal{U}}(\Gamma^d)$ are defined as follows:
$\hat{c}_e=0$ if $e\in A\setminus \{(s_1, x_1^{\text{in}}), (x_n^{\text{out}}, t_1),(s_2,t_2)\}$,
$\hat{c}_e=1$ otherwise;
$\overline{c}_e=0$ if $e\in A\setminus \{(s_1, x_1^{\text{in}}), (x_n^{\text{out}}, t_1),(s_2,t_2)\}\setminus A_{\pmb{y}}$,
$\overline{c}_e=1$ otherwise. Again, the uncertainty affects only the arcs from $A_{\pmb{y}}$.
Finally, we set:  
  the budget~$\Gamma^d:=\Gamma$, the  recovery parameter~$k:=3$, and
the bound~$\gamma:=0$. This completes the instance~$(G,\pmb{C},\widehat{\mathcal{U}}(\Gamma^d),k,\gamma)$.

We now need to prove the following assertion:
an instance $(\Xset,\Yset,\Zset,\digamma,\Gamma)$ of $\exists \forall (\Gamma) \exists$3\textsc{CNF-SAT}
 is a Yes-instance  if and only if  the corresponding instance $(G,\pmb{C},\widehat{\mathcal{U}}(\Gamma^d),k,\gamma)$
 of  \textsc{Decision-Rec Rob SP}$(\Gamma^d)$ with $\Phi^{\mathrm{excl}}(X,k)$
  is a Yes-instance.

($\Rightarrow$)
 Assume that  $(\Xset,\Yset,\Zset,\digamma,\Gamma)$ is a Yes-instance.
 Fix a 0-1 assignment~$\pmb{x}$ of the variables in~$\Xset$ such that for every 
 $\Yset'\subseteq \Yset$, $|\Yset'|\leq \Gamma$,
 there exists  a 0-1~assignment~$(\pmb{y},\pmb{z})$ of the variables in~$\Yset$ and $\Zset$, in which
 $\pmb{y}|_{\Yset'}=\pmb{0}$ and $\digamma(\pmb{x},\pmb{y},\pmb{z})$ is satisfied.
 A path~$X\in \Phi$ consists of the arcs: $(s_1, x_1^{\text{in}})$, $(x_n^{\text{out}}, t_1)$, $(t_1,s_2)$ and  $(s_2,t_2)$
    and the subpath traversing the~$\Xset$ variables gadgets according to the assignment~$\pmb{x}$, i.e.,
    if~$x_i := 0$, then~$X$ traverses the positive literal subpath (allowing the clauses to be traversed using negative literal arcs);
    otherwise, $X$~traverses the negative literal subpath.
 Clearly, first stage cost of~$X$ equals~0.
 Consider any scenario $\pmb{c}\in \widehat{\mathcal{U}}(\Gamma^d)$. Under this scenario a subset  of arcs $A'_{\pmb{y}}\subseteq A_{\pmb{y}}$, $|A'_{\pmb{y}}|\leq \Gamma^d=\Gamma$, have costs equal to~1 and the costs of the remaining arcs are~0.
 Define $\Yset'=\{y_i\, :\, e_{y_i}\in A'_{\pmb{y}}\}\subseteq \Yset$,
 $|\Yset'|\leq  \Gamma^d=\Gamma$. Since
$(\Xset,\Yset,\Zset,\digamma,\Gamma)$ is a Yes-instance,
 there is a 0-1 assignment $(\pmb{y},\pmb{z})$ in $\Yset\cup \Zset$,
 in which  $\pmb{y}|_{\Yset'}=\pmb{0}$ satisfying, together with~$\pmb{x}$, $\digamma(\pmb{x},\pmb{y},\pmb{z})$. The assignment
 $(\pmb{x},\pmb{y},\pmb{z})$ uniquely determines the subpath of $Y\in \Phi^{\mathrm{excl}}(X,3)$
 used to traverse the variables and clauses gadgets,
 i.e. the simple path from~$s_1$ to~$t_1$ denoted by~$\pi_1$.
 Indeed, $\pi_1$ contains
 the $3n$ subpaths (the lower or upper paths)
 traversing the arcs 
 corresponding to the opposite literals under~$(\pmb{x},\pmb{y},\pmb{z})$ in the $3n$ variable subgraphs.
 Then, the path uses the~$(z_n^{\text{out}}, C_1^{\text{in}})$ arc to enter the series of clause gadgets
 and traverses them by using the literal arcs according to the~$(\pmb{x},\pmb{y},\pmb{z})$ assignment.
 Thus $\pi_1$ does not contain any arc from~$A'_{\pmb{y}}$ and in consequence $\sum_{e \in \pi_1}c_e=0$.
 Since~$\pi_1$ is simple and traverses all variable and clause gadgets,
 it can be completed to traverse all switch gadgets at the beginning together with the other~$\pi_2$ path going from~$s_2$ to~$t_2$.
 Together they create a node-disjoint pair.
 We connect~$\pi_1$ and~$\pi_2$ to obtain the path~$Y \in \Phi^{\mathrm{excl}}(X,3)$.
 Indeed, $Y$ excludes only three arcs of~$X$: $\{(s_1,x_1^{\text{in}}),(x_n^{\text{out}}, t_1), (s_2,t_2)\}$.
 Moreover, $\sum_{e\in X} C_e+\sum_{e\in Y} c_e=0$.
  Hence $(G,\pmb{C},\widehat{\mathcal{U}}(\Gamma^d),k=3,\gamma=0)$  is a Yes-instance.

 ($\Leftarrow$) Assume that $(G,\pmb{C},\widehat{\mathcal{U}}(\Gamma^d),k=3,\gamma=0)$  is a Yes-instance.
 Thus, there exists a path~$X\in \Phi$ 
 such that $\sum_{e\in X} C_e=0$. From the structure of the first stage costs, it  follows that it 
  consists of the arcs: $(s_1,x_1^{\text{in}})$, $(x_n^{\text{out}}, t_1)$, $(t_1,s_2)$ and $(s_2,t_2)$
  and the subpath traversing arcs of the $n$ $\Xset$ variables subgraphs.
 This subpath uniquely defines 
 the 0-1 assignment~$\pmb{x}$ of the variables in~$\Xset$, i.e.
 $x_i:=1$ if the subpath contains negative literal subpath of~$x_i$ variable gadget, $x_i:=0$ otherwise.
 Consider any $\Yset'\subset \Yset$ such that $|\Yset'|\leq \Gamma$.
 Let $A'_{\pmb{y}}=\{e_{y_i}\, :\, y_i\in \Yset'\}\subseteq A_{\pmb{y}}$, $|A_{\pmb{y}}|\leq  \Gamma= \Gamma^d$.
 Define  scenario $\pmb{c}\in \widehat{\mathcal{U}}(\Gamma^d)$ as follows:
 $c_e:=\overline{c}_e$ if $e\in A'_{\pmb{y}}$, $c_e:=\hat{c}_e$ if $e\in A\setminus  A'_{\pmb{y}}$.
 Since $(G,\pmb{C},\widehat{\mathcal{U}}(\Gamma^d),k=3,\gamma=0)$,
 there exists a simple path $Y\in \Phi^{\mathrm{excl}}(X,k)$ such that $\sum_{e\in Y} c_e=0$.
 From the structure of~$\pmb{c}$ and the fact that $Y\in \Phi^{\mathrm{excl}}(X,k)$, we have that
 none of the arcs $(s_1,x_1^{\text{in}})$, $(x_n^{\text{out}}, t_1)$, $(s_2,t_2)$ and arcs from~$A'_{\pmb{y}}$ belong to~$Y$.
 It has to traverse $(t_1,s_2)$ and contain, as subpaths,
 two simple node-disjoint paths~$\pi_1$ going from~$s_1$ to~$t_1$ through all switch, variable and clause gadgets,
 and~$\pi_2$, going from~$s_2$ to~$t_2$.
 Thanks to the blocking property of switches, the path~$\pi_1$ uniquely defines the 0-1 assignment $(\pmb{y},\pmb{z})$,
 namely, $y_i:=1$ ($z_i:=1$) if $e_{\overline{y}_i}\ \in \pi_1$ ($e_{\overline{z}_i} \in \pi_1$); 
 $y_i:=0$ ($z_i:=0$) if $e_{y_i}\in \pi_1$ ($e_{z_i}\in \pi_1$), $i\in [n]$.
 Obviously, $\pmb{y}|_{\Yset'}=\pmb{0}$.
 Since all clause gadgets are traversed according to the uniquely defined assignment $(\pmb{x}, \pmb{y}, \pmb{z})$,
 this assignment satisfies~$\digamma(\pmb{x},\pmb{y},\pmb{z})$
 and $(\Xset,\Yset,\Zset,\digamma,\Gamma)$ is a Yes-instance.
\end{proof}

By Observation~\ref{obext} and Theorem~\ref{drrexcl}, we get the following corollary:
\begin{cor}
\label{cdrrexcl}
\textsc{Rec Rob SP} under $\mathcal{U}(\Gamma^d)$
 with the neighborhood $\Phi^{\mathrm{excl}}(X,k)$
 is $\Sigma^p_3$-hard and $\Sigma^p_3$-hard to approximate even if $k=3$.
\end{cor}

\begin{thm}\label{drrsym}
    \textsc{Decision-Rec Rob SP}$(\Gamma^d)$ with the neighborhood $\Phi(X,k):=\Phi^{\mathrm{sym}}(X,k)$ is $\Sigma^p_3$-complete.
\end{thm}
\begin{proof}
By Lemma~\ref{lrrpi2p}, \textsc{Decision-Rec Rob SP}$(\Gamma^d)$ with $\Phi^{\mathrm{sym}}(X,k)$ is in~$\Sigma^p_3$.
We prove $\Sigma^p_3$-hardness by modifying the reduction from the proof of Theorem~\ref{drrexcl}
to adapt it from the exclusion to the symmetric-difference neighborhood.

In the recoverable robust problem the first-stage path~$X$ is not fixed,
and we must ensure that the second-stage path~$Y$ traverses the $\Xset$-variable gadgets using exactly the same arcs as~$X$.
In the exclusion neighborhood we handle this by allowing $Y$ to exclude only~$3$ arcs of~$X$;
the cost structure and the cost threshold then force those arcs to lie outside the $\Xset$-variable gadgets.
The symmetric-difference neighborhood, however, counts both excluded and included arcs,
so we must additionally prevent~$Y$ from exchanging any arcs with~$X$ inside the $\Xset$-variable gadgets.

To force the second-stage path~$Y$ to traverse the $\Xset$-variable gadgets using exactly the same arcs as~$X$,
we modify the first step of the reduction, in which the graph~$G = (V, A)$ is constructed,
so that all feasible first-stage paths have the same number of arcs and, independently, all feasible second-stage paths have the same number of arcs.
To this end we modify the variable gadget.
Recall that the switch gadgets replace the arcs corresponding to the literals of the variables occurring in the clauses with paths of~$5$ arcs,
e.g., if a variable~$x_1$ occurs in the clause~$\mathcal{C}_1$ as its positive literal,
then the~$(x_1^0, x_1^1)$ arc is connected to one of the~$(\mathcal{C}_1^{\text{in}}, \mathcal{C}_1^{\text{out}})$ arcs in the clause-gadget for~$\mathcal{C}_1$.
Due to the switch gadgets, the number of arcs in a variable gadget depends on the formula~$\digamma$, and,
moreover, a path may use a different number of arcs depending on whether it traverses the gadget along its upper or its lower subpath.
To make the gadgets uniform, after the switch gadgets have been connected to the variable and clause gadgets,
we replace each arc inside the upper and lower subpaths of a variable gadget that the switch gadgets left unchanged,
except the in- and out-arc of each subpath (its first and last arc), with a path of~$5$ arcs.
While the two in/out arcs are kept intact, every other arc of these subpaths (arcs encoding an assignment of a given variable) is then expanded into a path of~$5$ arcs,
so each assignment-encoding subpath has exactly~$5m+2$ arcs, where~$m$ is the number of clauses in~$\digamma$,
regardless of whether the variable is set to true or false.
We assign the first- and second-stage costs of the newly inserted arcs so that each replacement
path has the same total cost as the arc it replaces; hence the cost of every path is unchanged and the cost threshold~$0$ is preserved as well.
It follows that a zero-cost first-stage path encoding a valid assignment to the variables in~$\Xset$,
at this step of the construction, has $|X| = n(5m+2) + 4 = 5mn + 2n + 4$ arcs, where~$n = |\Xset|$.
Correspondingly, a second-stage path~$Y$ encoding a proper assignment to all variables has,
at this step of the construction, $|Y| = 15nm + 6n + 35m + 3$ arcs.
Such a second-stage path excludes exactly 3 arcs of~$X$,
namely, the~$(s_1, x_1^\text{in})$, $(x_{n}^{\text{out}}, t_1)$, and~$(s_2, t_2)$ arcs.
The number of included arcs, assuming $Y$ traverses exactly the same subpaths in $\Xset$-variable gadgets as $X$,
equals $10nm + 4n + 35m + 2$.

The second modification consists in adding a subpath~$t_1$-$s_2$
of~$|A| - 1$ arcs in place of the~$(t_1, s_2)$ arc in the second step of the reduction.
The number of arcs, due to the added~$\Xset$-variable gadgets and expanding the arcs in them, is now:
$|A| = 30nm + 12n + 57m + 6$; we use $|A| - 1$ arcs, i.e., all arcs except the replaced~$(t_1, s_2)$ arc.
Just as in the exclusion neighborhood case,
replacing this arc with that subpath allows us to ensure that both paths~$X$ and~$Y$ must contain the~$t_1$-$s_2$ subpath.
Now, by setting the neighborhood-size parameter carefully we ensure that the second-stage path properly traverses the graph and encodes a Boolean assignment.
We set~$k = 10nm + 4n + 35m + 5$, i.e., the neighborhood-size parameter is exactly the sum of the arcs included and excluded
by a second-stage path that traverses the same $\Xset$-subpaths as~$X$.
This value is tight: since every assignment-encoding subpath has the same length~$5m+2$,
deviating from~$X$ on an $\Xset$-variable gadget excludes one $5m+2$-arc subpath and includes another,
raising the symmetric difference of~$X$ and~$Y$ by~$2(5m+2)$ and so exceeding~$k$,
while any choice of the~$\Yset$- and~$\Zset$-assignment by the second-stage path contributes the same number of included arcs,
i.e., yields the same symmetric difference~$k$.
Hence every second-stage path within the neighborhood traverses exactly the same $\Xset$-subpaths as~$X$,
so the first-stage assignment~$\pmb{x}$ is inherited by the second-stage path,
and the rest of the proof follows exactly as in the~$\Phi^{\mathrm{excl}}(X,k)$ neighborhood case.
\end{proof}

Observation~\ref{obext} and Theorem~\ref{drrsym} lead to the following result:
\begin{cor}
\label{cdrrsym} 
\textsc{Rec Rob SP} under $\mathcal{U}(\Gamma^d)$
 with the neighborhood $\Phi^{\mathrm{sym}}(X,k)$
 is $\Sigma^p_3$-hard and $\Sigma^p_3$-hard to approximate.
\end{cor}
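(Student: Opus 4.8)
The plan is to derive the corollary immediately from Theorem~\ref{drrsym} together with Observation~\ref{obext}, exactly as was done for the exclusion neighborhood. First I would note that the inner maximin term $\max_{\pmb{c}\in\mathcal{U}(\Gamma^d)}\min_{Y\in\Phi^{\mathrm{sym}}(X,k)}\sum_{e\in Y}c_e$ occurring in the objective of \textsc{Rec Rob SP} is precisely the quantity addressed by Observation~\ref{obext}. Hence, for every fixed first-stage path $X\in\Phi$, this value coincides with the same expression evaluated over the finite set $\widehat{\mathcal{U}}(\Gamma^d)$. Because the first-stage cost $\sum_{e\in X}C_e$ does not depend on $\pmb{c}$, adding it and then minimizing over $X\in\Phi$ preserves the equality, so the optimal value of \textsc{Rec Rob SP} under $\mathcal{U}(\Gamma^d)$ equals its optimal value under $\widehat{\mathcal{U}}(\Gamma^d)$. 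The decision version of the latter is exactly \textsc{Decision-Rec Rob SP}$(\Gamma^d)$ with $\Phi^{\mathrm{sym}}(X,k)$, which is $\Sigma^p_3$-complete by Theorem~\ref{drrsym}; this gives the $\Sigma^p_3$-hardness of \textsc{Rec Rob SP} with the symmetric difference neighborhood.

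For the inapproximability part I would exploit the gap created by the reduction underlying Theorem~\ref{drrsym}. There every first- and second-stage cost is an integer in $\{0,1\}$ and the threshold is $\gamma=0$; thus a Yes-instance of $\exists \forall (\Gamma) \exists$3\textsc{CNF-SAT} maps to an instance of optimal value exactly $0$, while a No-instance maps to one of optimal value at least $1$. Any polynomial-time $\rho$-approximation algorithm (for any factor $\rho\geq 1$) would have to return a value at most $\rho\cdot 0=0$ on Yes-instances and at least $1$ on No-instances, and would therefore decide a $\Sigma^p_3$-complete problem. Unless the polynomial hierarchy collapses, no such approximation can exist, yielding $\Sigma^p_3$-hardness of approximation.

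The only point needing care is verifying that Observation~\ref{obext}, which is stated solely for the inner $\max_{\pmb{c}}\min_Y$ term, really lifts to the full two-stage objective. As indicated above this is immediate: the substitution of $\widehat{\mathcal{U}}(\Gamma^d)$ for $\mathcal{U}(\Gamma^d)$ is valid for each fixed $X$, and the first-stage cost is scenario-independent, so the outer minimization over $X$ causes no difficulty. I therefore expect no substantive obstacle beyond routine bookkeeping.
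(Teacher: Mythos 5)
Your proposal is correct and follows exactly the route the paper intends: the corollary is obtained by combining Observation~\ref{obext} (which, as you note, lifts to the full two-stage objective since the first-stage cost is scenario-independent) with the $\Sigma^p_3$-completeness of the decision version from Theorem~\ref{drrsym}, and the inapproximability follows from the $0$ versus $\geq 1$ gap built into the reduction. No substantive differences from the paper's argument.
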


\section{Conclusions}

In this paper the computational complexity of the recoverable robust shortest path problem with discrete recourse was investigated.
For the arc exclusion and arc symmetric difference neighborhoods the problem turned out to be $\Sigma_3^p$-hard
and the inner adversarial problem turned out to be $\Pi_2^p$-hard.
There are still some open questions regarding the problems under consideration. 
Namely, the \textsc{Adv SP} and \textsc{Rec Rob SP} problems with the arc inclusion neighborhood in
general digraphs are known to be strongly NP-hard~\cite{BKS95} and strongly NP-hard as well as 
not approximable~\cite{B12,NO13}, respectively.
However, it remains unknown whether these hardness results can be further strengthened. In this context,
\textsc{Decision-Adv SP} belongs to coNP, while \textsc{Decision-Rec Rob SP} is in $\Sigma_2^p$.

\subsubsection*{Acknowledgements}
The authors were supported by
 the National Science Centre, Poland, grant 2022/45/B/HS4/00355.


\appendix

\section{Appendix}
\label{dod}

We will use the following problem:

\begin{description}
\item{$\forall \exists$\textsc{3CNF-SAT}}
\item{Input:} Two disjoint sets $\Pset$ and $\Qset$ of Boolean variables that take the values 0 and 1,
a Boolean formula $\Psi(\Pset,\Qset)$ in conjunctive normal form (3CNF), where every clause consists of three literals.
\item{Question:} Is it true that for all 0-1 assignments~$\pmb{p}$ of variables in~$\Pset$ 
there exists a 0-1~assignment~$\pmb{q}$ of variables in~$\Qset$ such
that $\Psi(\pmb{p},\pmb{q})$ is satisfied?
\end{description}
$\forall \exists$\textsc{3CNF-SAT} is $\Pi^p_2$-complete, even for $|\Pset|=|\Qset|$, (see, e.g.,~\cite{SU02}).

\begin{proof}[The proof of Theorem~\ref{thm01}]
Obviously, $\forall (\Gamma) \exists$\textsc{CNF-SAT} is in $\Pi^p_2$,
since it can be expressed in the form
$\forall (\Yset' \subseteq \Yset\,:\, |\Yset'|\leq \Gamma) 
(\exists (\pmb{y},\pmb{z}) \in \{0,1\}^{|\Yset|\times |\Zset|}\,:\, 
\pmb{y}|_{\Yset'}=\pmb{0})\digamma(\pmb{y},\pmb{z})$. 
We need to show that it is $\Pi^p_2$-hard by providing a reduction from $\forall \exists$\textsc{3CNF-SAT}
to $\forall (\Gamma) \exists$\textsc{CNF-SAT}, which 
 is adapted from~\cite[Theorem~1]{GLW24} to our case (some parts are nearly verbatim).
 
Given an instance $(\Pset,\Qset,\Psi)$  of  $\forall \exists$\textsc{3CNF-SAT}, namely,
 two  disjoint sets of Boolean variables, $\Pset=\{p_1,\ldots,p_n\}$ and $\Qset=\{q_1,\ldots,q_n\}$,
 a Boolean formula  $\Psi$ in CNF, $\Psi=\mathcal{C}_1\wedge\cdots\wedge\mathcal{C}_m$, where
 each clause $\mathcal{C}_i$, $i\in[m]$, contains three literals, we build an instance $(\Yset,\Zset,\digamma,\Gamma)$ of $\forall (\Gamma) \exists$\textsc{CNF-SAT} in two steps.
 First, we define two disjoint sets of Boolean variables:  
 $\Yset=\{y^{\mathrm{P}}_1,\ldots,y^{\mathrm{P}}_n, y^{\mathrm{N}}_1,\ldots,y^{\mathrm{N}}_n\}$
 and $\Zset=\{z_1,\ldots, z_n\}\cup\{s_1,\ldots,s_n\} \cup \{s\}$
 and replace in the formula~$\Psi$ each positive literal~$p_j$ by $y^{\mathrm{P}}_j$,
 each negative literal~$\overline{p}_j$ by~$y^{\mathrm{N}}_j$,
 each positive literal~$q_j$ by $z_j$ and each negative literal~$\overline{q}_j$ by~$\overline{z}_j$, $j\in[n]$.
 The resulting  formula~$\Psi'=\mathcal{C}'_1\wedge\cdots\wedge\mathcal{C}'_m$ is in 3CNF and now
 depends on new variables~$y^{\mathrm{P}}_j$, $y^{\mathrm{N}}_j$ and $z_j$, $j\in[n]$.
 In the second step, we  form the formula~$\digamma$, which is based on the modified~$\Psi'$ in
 which to each clause~$\mathcal{C}'_i$ we add the additional positive literal~$s$, i.e.
 \begin{equation}
 \mathcal{C}''_i:=(\mathcal{C}'_i\vee s),  \,i\in [m]. \label{4lc}
 \end{equation}
 Now each $ \mathcal{C}''_i$, $i\in [m]$, consists of exactly four literals.
Then, we add to~$\digamma$ $2n$ two literal clauses of the following form:
\begin{equation}
(y^{\mathrm{P}}_j \vee \overline{s}_j) \text{ and } (y^{\mathrm{N}}_j \vee \overline{s}_j), \,j\in [n].\label{2lc}
\end{equation}
Finally, we add to~$\digamma$ the following clause
\begin{equation}
(\overline{s} \vee  s_1  \vee \cdots \vee  s_n) \label{1lc}
\end{equation}
and set $\Gamma=n$.
Thus, the constructed instance $(\Yset,\Zset,\digamma,\Gamma)$ of 
$\forall (\Gamma) \exists$\textsc{CNF-SAT} contains $4n+1$ variables and the formula~$\digamma$
consists of $m+2n+1$ clauses.
To meet the assumption~$|\Yset|=|\Zset|$, it is enough to add to~$\Yset$ a ``dummy" variable that does not appear in any clause of~$\digamma$.
The constructed formula contains clauses of more or fewer literals than 3
    but such formulas can always be normalized to 3CNF using the same techniques as used when reducing $\textsc{Satisfiability}$ to $\textsc{3-Sat}$ \cite{KR74}.

It remains to show that 
an instance $(\Pset,\Qset,\Psi)$  of  $\forall \exists$\textsc{3CNF-SAT}
 is a Yes-instance if and only if  the built instance $(\Yset,\Zset,\digamma,\Gamma)$  of $\forall (\Gamma) \exists$\textsc{CNF-SAT}
 is a Yes-instance. 
 
($\Rightarrow$)
 Assume that instance $(\Pset,\Qset,\Psi)$ is a Yes-instance.
 Let $\Yset'\subseteq \Yset$ be an arbitrary but fixed subset such that $|\Yset'|\leq \Gamma=n$.
 We need to consider two cases.
 
 Case 1: $|\Yset'|=n$ and for every $j\in [n]$ exactly one of the variables $y^{\mathrm{P}}_j$ and $y^{\mathrm{N}}_j$
 belongs to~$\Yset'$. Thus, in this case, $\Yset'$ uniquely defines the 0-1 assignment~$\pmb{p}$ of  variables in~$\Pset$, i.e.
 $p_j:=0$ if $y^{\mathrm{P}}_j\in \Yset'$ and $p_j:=1$ if $y^{\mathrm{N}}_j\in \Yset'$, $j\in[n]$, and vice versa.
 Since $(\Pset,\Qset,\Psi)$ is a Yes-instance, there exists a 0-1 assignment $\pmb{q}$  of  variables in~$\Qset$
 such that $\Psi(\pmb{p},\pmb{q})$ is satisfied. Therefore,
 there exist the 0-1 assignment~$\pmb{y}$  of  variables in~$\Yset$, where
 $y^{\mathrm{P}}_j:=1$ and $y^{\mathrm{N}}_j:=0$ if $p_j=1$; 
 $y^{\mathrm{P}}_j:=0$ and $y^{\mathrm{N}}_j:=1$ if $p_j=0$, $j\in[n]$,
 of course $\pmb{y}|_{\Yset'}=\pmb{0}$, and the assignment~$\pmb{z}\in\{0,1\}^n$,
 where $z_j=1 \leftrightarrow q_j=1$, $j\in[n]$ such that $\Psi'(\pmb{y},\pmb{z})$ is satisfied
 and hence every $\mathcal{C}'_i$, $i\in[m]$, is satisfied. In consequence,
 all the clauses~(\ref{4lc}) are satisfied.
 In order to satisfy all the clauses~(\ref{2lc}) and the clause~(\ref{1lc}), we set $s_j=0$ for every~$j\in[n]$
 and $s=0$.
 Thus,  $\digamma(\pmb{y},\pmb{z})$ is satisfied.
 
 Case 2: there exists $j^{*}\in [n]$ such that neither of the variables $y^{\mathrm{P}}_{j^{*}}$ and $y^{\mathrm{N}}_{j^{*}}$ belongs to~$\Yset'$.
 In this case, $\Yset'$ does not define a 0-1 assignment~$\pmb{p}$ of variables in~$\Pset$.
 We now define an 0-1 assignment $(\pmb{y},\pmb{z})$ satisfying~$\digamma$, namely, $\pmb{y}|_{\Yset'}:=\pmb{0}$,
 $y^{\mathrm{P}}_{j^{*}}:= 1$, $y^{\mathrm{N}}_{j^{*}}:=1$, $s_{j^{*}}:=1$, $s_j:=0$ for every $j^{*}\not=j$ and $s:=1$.
 The rest of the components in  $(\pmb{y},\pmb{z})$ are arbitrary.
 It is easily seen that the  clauses~(\ref{4lc}),~(\ref{2lc}) and~(\ref{1lc}) are satisfied
 for the assignment~$(\pmb{y},\pmb{z})$ and  $\digamma(\pmb{y},\pmb{z})$ is satisfied.
 
 ($\Leftarrow$)
 Assume that instance $(\Yset,\Zset,\digamma, \Gamma)$ is a Yes-instance.
 Let $\pmb{p}$ be an arbitrary 0-1 assignment of variables in $\Pset$.
 This assignment uniquely defines the subset $\Yset'\subseteq \Yset$ such that $|\Yset'|=n$, i.e.
 $\Yset'=\{y^{\mathrm{P}}_j\,:\, p_j=0, j\in [n]\} \cup \{y^{\mathrm{N}}_j\,:\, p_j=1, j\in [n]\}$.
  Since $(\Yset,\Zset,\digamma, \Gamma)$ is a Yes-instance, there exists a 0-1 assignment $(\pmb{y},\pmb{z})$  of  variables in~$\Yset$ 
  and~$\Zset$
 such that $\pmb{y}|_{\Yset'}=\pmb{0}$ satisfying~$\digamma$.
 Since the clauses~(\ref{2lc}) and~(\ref{1lc}) are satisfied,
 it is easy to check that $s_j=0$ for every $j\in [n]$  and $s=0$.
 From the assignment~$s=0$ and the fact that all the clauses~(\ref{4lc}) are satisfied,
 it follows that for each $i\in [m]$ the clause $\mathcal{C}'_i$ has to be satisfied by $(\pmb{y},\pmb{z})$.
 Consider every clause $\mathcal{C}'_i$, $i\in [m]$.
 If $\mathcal{C}'_i$ is satisfied by the literal $y^{\mathrm{P}}_j$ ($y^{\mathrm{P}}_j=1$),
 then, by the definition of~$\Yset'$, $y^{\mathrm{P}}_j\not \in \Yset'$  and the corresponding $p_j=1$ and the clause 
 $\mathcal{C}_i$ is satisfied by~$p_j$. Likewise, 
 if $\mathcal{C}'_i$ is satisfied by a literal $y^{\mathrm{N}}_j$ ($y^{\mathrm{N}}_j=1$),
 then again, by the definition of~$\Yset'$, $y^{\mathrm{N}}_j\not \in \Yset'$ and the corresponding $p_j=0$, therefore, the clause 
 $\mathcal{C}_i$ is satisfied by~$\overline{p}_j$.
 Clearly, a 0-1 assignment~$\pmb{q}$ can be defined as follows $z_j=1 \leftrightarrow q_j=1$, $j\in[n]$.
 Hence, $\Psi(\pmb{p},\pmb{q})$ is satisfied.
 \end{proof}

\end{document}